\newcommand*{\TechReport}{}%
\newcommand*{\addspace}{}%
\newcommand*{\IEEERelatedWork}{}%
\newtheorem{theorem}{Theorem}
\newtheorem{claim}{Claim}
\newtheorem{definition}{Definition}
\newtheorem{example}{Example}
\newlength{\grafflecm}
\def\ps@headings{%
\def\@oddhead{\mbox{}\scriptsize\rightmark \hfil \thepage}%
\def\@evenhead{\scriptsize\thepage \hfil\leftmark\mbox{}}%
\def\@oddfoot{}%
\def\@evenfoot{}}
\newenvironment{chapquote}[2][2em]
  {\setlength{\@tempdima}{#1}%
   \def\chapquote@author{#2}%
   \parshape 1 \@tempdima \dimexpr\textwidth-2\@tempdima\relax%
   \itshape}
  {\par\normalfont\hfill--\ \chapquote@author\hspace*{\@tempdima}\par\bigskip}
\newcommand{\rptp}{ReversePTP}
\newcommand{\rptp}{\textsc{ReversePTP}}
\newcommand{\timec}{\proc{Time4}}
\newcommand{\boldtimec}{\proc{\textbf{Time4}}}
\newcommand{\naive}{\emph{naive}}
\newcommand{\twophase}{\emph{two-phase}}
\newcommand{\order}{\emph{order}}
\newcommand{\scratch}{\emph{scratch}}
\newcommand{\add}{\textit{Add}}
\newcommand{\remove}{\textit{Remove}}
\newcommand{\offset}{\mathsf{offset}}
\newcommand{\oi}{\mathrm{\offset}_{i}}
\newcommand{\dmi}{\Delta}
\newcommand{\ivar}{I_R}
\newcommand{\schea}{scheduling error}
\newcommand{\exea}{execution accuracy}
\begin{document}
\ifdefined\JournalVer
\bstctlcite{IEEEexample:BSTcontrol}
\fi

\interfootnotelinepenalty=10000

\date{}

\title{
\ifdefined\JournalVer
\timec: Time for SDN
\else
\ifdefined\TechReport
\hspace{15mm}
\fi
\ifdefined\TechReport
\hspace{7mm}\timec: Time for SDN
\newline \newline
				\large Technical Report\textsuperscript{\ensuremath\dagger}\thanks{\textsuperscript{\ensuremath\dagger}\scriptsize This report is an extended version of~\cite{Time4Infocom}, which was accepted to IEEE INFOCOM~'16, San Francisco, April 2016. A preliminary version of this report was published in arXiv~\cite{time4ArXivOld} in May, 2015.}, February 2016
\else
Software Defined Networks: It's About Time
\fi
\fi
}

\ifdefined\BlindRev
\else
\ifdefined\JournalVer
\author{\IEEEauthorblockN{\large Tal Mizrahi, Yoram Moses}}

\else
\author{
{Tal Mizrahi, Yoram Moses\textsuperscript{\ensuremath*}\thanks{\textsuperscript{\ensuremath*}\scriptsize Yoram Moses is the Israel Pollak academic chair at Technion.}}\\
Technion --- Israel Institute of Technology\\
Email: \{dew@tx, moses@ee\}.technion.ac.il
}
\fi
\fi

\maketitle

\thispagestyle{empty}

\begin{abstract}
With the rise of Software Defined Networks (SDN), there is growing interest in dynamic and centralized traffic engineering, where decisions about forwarding paths are taken dynamically from a network-wide perspective. Frequent path reconfiguration can significantly improve the network performance, but should be handled with care, so as to minimize disruptions that may occur during network updates. 

In this paper we introduce \timec, an approach that uses accurate time to coordinate network updates. 
\ifdefined\TechReport
\timec\ is a powerful tool in softwarized environments, that can be used for various network update scenarios.
Specifically, we
\else
We
\fi
characterize a set of update scenarios called \emph{flow swaps}, for which \timec\ is the optimal update approach, yielding less packet loss than existing update approaches. We define the \emph{lossless flow allocation problem}, and formally show that in environments with frequent path allocation, scenarios that require simultaneous changes at multiple network devices are inevitable.

We present the design, implementation, and evaluation of a \timec-enabled OpenFlow prototype. 
\ifdefined\OpenSourceSoon
The prototype will soon be publicly available as open source.
\else
The prototype is publicly available as open source. 
\fi
Our work includes an extension to the OpenFlow protocol that has been adopted by the Open Networking Foundation (ONF), and is now included in OpenFlow 1.5. 
\ifdefined\TechReport
Our experimental results show the significant advantages of \timec\ compared to other network update approaches, and demonstrate an SDN use case that is infeasible without \timec.
\else
Our experimental results demonstrate the significant advantages of \timec\ compared to other network update approaches.
\fi
\end{abstract}

\ifdefined\JournalVer
\begin{IEEEkeywords}
SDN, time, clock synchronization, network updates.
\end{IEEEkeywords}

\maketitle
\IEEEdisplaynotcompsoctitleabstractindextext
\fi

\vspace{5mm}

\begin{chapquote}{\textit{Ray Cummings}}
\hspace{-5mm}
Time is what keeps everything from happening at once
\end{chapquote}

\vspace{-2mm}

\section{Introduction}
\subsection{It's About Time}

\ifdefined\JournalVer
{\let\thefootnote\relax\footnotetext{
This manuscript is an extended version of~\cite{Time4Infocom},
which was accepted to IEEE INFOCOM '16, San Francisco, April 2016.

\ifdefined\addspace \vspace{1mm} \fi
This submission includes the following new technical contributions:

$\bullet$ A new subsection has been added (Section~\ref{NetUtilSec}), presenting new theoretical analysis of how the network utilization affects flow swaps. New theoretical analysis has also been added about scaling the results to a large number of paths, including the newly added Theorem~\ref{MSwapTheo}.

$\bullet$ The current version incorporates the proofs of all theorems, including the proofs of Theorems~\ref{SwapImpactTheorem} and~\ref{NSwapTheo}, which were excluded from the conference version. 

$\bullet$ The experimental evaluation section (Section~\ref{EvaluationSec}) has been significantly extended with new experimental results, including the video swapping experiment, which is presented in a new subsection (\ref{MicrobSec}). Other experimental results that were not included in the conference paper have also been added to the current version (see Figures~\ref{fig:CDF} and~\ref{fig:LossvsDelta}).

\ifdefined\addspace \vspace{2mm} \fi

Tal Mizrahi and Yoram Moses are with the Department of Electrical Engineering, Technion, Haifa 32000,
Israel (e-mails: dew@tx.technion.ac.il, moses@ee.technion.ac.il). Yoram Moses is the Israel Pollak academic chair at Technion.

}}
\fi

The use of synchronized clocks was first introduced in the $19^{th}$ century by the Great Western Railway company in Great Britain. Clock synchronization has significantly evolved since then, and is now a mature technology that is being used by various different applications, from mobile backhaul networks~\cite{G8271} to distributed databases~\cite{corbett2013spanner}.


The Precision Time Protocol (PTP), defined in the IEEE 1588 standard~\cite{IEEE1588}, can synchronize clocks to a very high degree of accuracy, typically on the order of 1~microsecond~\cite{ChinaMobile,G8271,C37.238}. PTP is a common and affordable feature in commodity switches.
Notably, 9 out of the 13~SDN-capable switch silicons listed in the Open Networking Foundation (ONF) SDN Product Directory~\cite{ONFSDNProd} have native IEEE 1588 
\ifdefined\ShortVersion
support.
\else
support~\cite{BCM56840, BCM56850, CTC6048, FM5000, LSI, Mellanox, Tilera, Armada, HX4100}. 
\fi

In this work we introduce \timec, a \textbf{generic} tool for using time in SDN. One of the products of this work is a new feature that enables timed updates in OpenFlow, and has been incorporated in OpenFlow 1.5. Furthermore, we present a class of update scenarios in which the use of accurate time is provably optimal, while existing update methods are sub-optimal.

\subsection{The Challenge of Dynamic Traffic Engineering in SDN}
\begin{sloppypar}
Defining network routes dynamically, based on a complete view of the network, can significantly improve the network performance compared to the use of distributed routing protocols. 
SDN and OpenFlow~\cite{McKeownOpenflow,OpenFlow1.4} have been leading trends in this context, but several other ongoing efforts offer similar  
\ifdefined\ShortVersion
concepts (e.g., ~\cite{i2rs}).
\else
concepts.
The Interface to the Routing System (I2RS) working group~\cite{i2rs}, and the Forwarding and Control Element Separation (ForCES) working group~\cite{forces} are two examples of such ongoing efforts in the Internet Engineering Task Force (IETF). 
\fi
\end{sloppypar}

Centralized network
\ifdefined\TechReport
updates, whether they are related to network topology, security policy, or other configuration attributes, 
\else
updates
\fi
often involve multiple network devices. Hence, updates must be performed in a way that strives to minimize temporary anomalies such as traffic loops, congestion, or disruptions, which may occur during transient states where the network has been partially updated.

While SDN was originally considered in the context of campus networks~\cite{McKeownOpenflow} and data centers~\cite{al2010hedera}, it is now also being considered for Wide Area Networks (WANs)~\cite{hong2013achieving, jain2013b4}, carrier networks, and mobile backhaul networks~\cite{onfmobile}. 

WAN and carrier-grade networks require a very low packet loss rate. Carrier-grade performance is often associated with the term \emph{five nines}, representing an availability of 99.999\%. Mobile backhaul networks require a Frame Loss Ratio (FLR) of no more than $10^{-4}$ for voice and video traffic, and no more than $10^{-3}$ for lower priority traffic~\cite{MEF22.1}. Other types of carrier network applications, such as storage and financial trading require even lower loss rates~\cite{MEF23.1}, on the order of $10^{-5}$. 

Several recent works have explored the realm of dynamic path reconfiguration, with frequent updates on the order of minutes~\cite{hong2013achieving, jain2013b4, jin2014dynamic}, enabled by SDN. Interestingly, for voice and video traffic, a frame loss ratio of up to $10^{-4}$ implies that service must not be disrupted for more than $6$~milliseconds per minute. Hence, if path updates occur on a per-minute basis, then transient disruptions must be limited to a short period of no more than a few milliseconds.

\subsection{Timed Network Updates}
\label{TimedSec}
\ifdefined\TechReport
We explore the use of \emph{accurate time} as a tool for performing coordinated network updates in a way that minimizes packet loss. Softwarized management can significantly benefit from using time for \emph{coordinating} network-wide orchestration, and for enforcing a given \emph{order} of events.
We introduce \timec, which is an update approach that performs multiple changes at different switches at the same time. 
\else
We explore the use of \emph{accurate time} as a tool for performing coordinated network updates in a way that minimizes packet loss. We introduce \timec, which is an update approach that performs multiple changes at different switches at the same time. 
\fi

\begin{example}
Fig.~\ref{fig:Swap} illustrates a \emph{flow swapping} scenario. In this scenario, the forwarding paths of two flows, $f_1$ and $f_2$, need to be reconfigured, as illustrated in the figure. It is assumed that all links in the network have an identical capacity of 1 unit, and that both $f_1$ and $f_2$ require a bandwidth of 1 unit. 
In the presence of accurate clocks, by scheduling $S_1$ and $S_3$ to update their paths at the same time, there is no congestion during the update procedure, and the reconfiguration is smooth. As clocks will typically be reasonably well synchronized, albeit not perfectly synchronized, such a scheme will result in a very short period of congestion. 
\end{example}

\begin{figure}[htbp]
	\ifdefined\cutspace \vspace{-5mm} \fi
  \centering
  \fbox{\includegraphics[width=.47\textwidth]{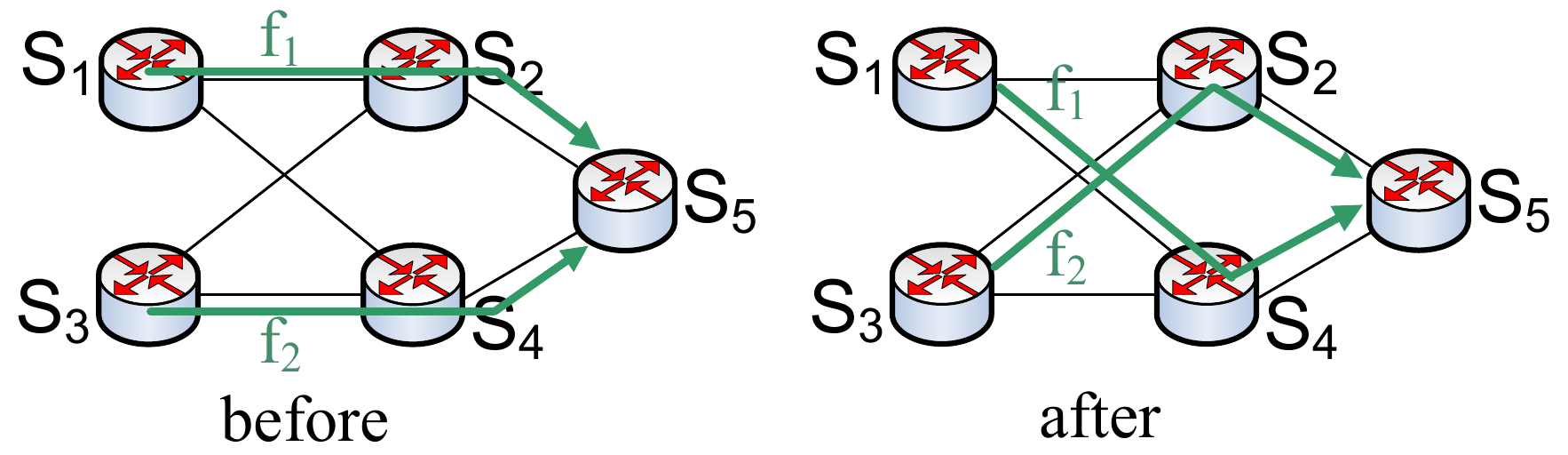}}
	\captionsetup{justification=centering}
  \caption{Flow Swapping---Flows need to convert from the ``before'' configuration to the ``after''.}
  \label{fig:Swap}
	\ifdefined\cutspace \vspace{-2mm} \fi
\end{figure}

In this paper we show that in a dynamic environment, where flows are frequently added, removed or rerouted, flow swaps are inevitable. A notable example of the importance of flow swaps is a recently published work by Fox Networks~\cite{edwards2014using}, in which accurately timed flow swaps are essential in the context of video switching.

One of our key results is that simultaneous updates are the optimal approach in scenarios such as Example~1, whereas other update approaches may yield considerable packet loss, or incur higher resource overhead. Note that such packet loss can be reduced either by increasing the capacity of the communication links, or by increasing the buffer memories in the switches. We show that for a given amount of resources, \timec\ yields lower packet loss than other approaches.

\textbf{Accuracy} is a key requirement in \timec; since updates cannot be applied at the exact same instant at all switches, they are performed within a short time interval called the \emph{\schea}. The experiments we present in Section~\ref{EvaluationSec} show that the \schea\ in \textbf{software} switches is on the order of 1~millisecond. The TCAM-based \textbf{hardware} solution of~\cite{Infocom-TimeFlip} can execute scheduled events in existing switches with an accuracy \textbf{on the order of 1~microsecond}. 



\ifdefined\TechReport
\begin{sloppypar}
Accurate time is a powerful abstraction for SDN programmers, not only for flow swaps, but also for \textbf{timed consistent updates}, as discussed by~\cite{TimedConsistent}.
\end{sloppypar}
\fi

\ifdefined\IEEERelatedWork
\subsection{Related Work}
\label{RelatedSec}
\ifdefined\TechReport
Time and synchronized clocks have been used in various distributed applications, from mobile backhaul networks~\cite{G8271} to distributed databases~\cite{corbett2013spanner}.  
\ifdefined\ShortVersion
Time-of-day routing~\cite{ash1985use} routes traffic to different destinations based on the time-of-day, but is typically performed at a low rate and does not place demanding requirements on accuracy. 
\else
Time-of-day routing~\cite{ash1985use} routes traffic to different destinations based on the time-of-day. Path calendaring~\cite{kandula2014calendaring} can be used to configure network paths based on scheduled or foreseen traffic changes. The two latter examples are typically performed at a low rate and do not place demanding requirements on accuracy. 
\fi
\fi

Various network update approaches have been analyzed in the literature. A common approach is to use a sequence of configuration commands~\cite{francois2007avoiding, vanbever2011seamless, liu2013zupdate, jin2014dynamic}, whereby the \textbf{order} of execution guarantees that no anomalies are caused in intermediate states of the procedure. However, as observed by~\cite{jin2014dynamic}, in some update scenarios, known as \textbf{deadlocks}, there is no order that guarantees a consistent transition. 
\textbf{\emph{Two-phase}} updates~\cite{reitblatt2012abstractions} use configuration version tags to guarantee consistency during updates. However, as per~\cite{reitblatt2012abstractions}, \twophase\ updates cannot guarantee congestion freedom, and are therefore not effective in flow swap scenarios, such as Fig.~\ref{fig:Swap}. Hence, in flow swap scenarios the \order\ approach and the \twophase\ approach produce the same result as the simple-minded approach, in which the controller sends the update commands as close as possible to instantaneously, and hopes for the best.

In this paper we present \timec, an update approach that is most effective in flow swaps and other deadlock~\cite{jin2014dynamic} scenarios, such as Fig.~\ref{fig:Swap}. We refer to update approaches that do not use time as \textbf{untimed} update approaches.

In SWAN~\cite{hong2013achieving}, the authors suggest that reserving unused \emph{scratch} capacity of 10-30\% on every link can allow congestion-free updates in most scenarios. The B4~\cite{jain2013b4} approach prevents packet loss during path updates by temporarily reducing the bandwidth of some or all of the flows. Our approach does not require scratch capacity, and does not reduce the bandwidth of flows during network updates. Furthermore, in this paper we show that variants of SWAN and B4 that make use of \timec\ can perform better than the original versions.


\ifdefined\TechReport
A recently published work by Fox Networks~\cite{edwards2014using} shows that accurately timed path updates are essential for video swapping.
We analyze this use case further in Section~\ref{EvaluationSec}.
\else
\fi

Rearrangeably non-blocking topologies (e.g., ~\cite{pippenger1978rearrangeable}) allow new traffic flows to be added to the network by rearranging existing flows. The analysis of flow swaps presented in this paper emphasizes the requirement to perform \emph{simultaneous} reroutes during the rearrangement procedure, an aspect which has not been previously studied. 

\ifdefined\BlindRev
The concept of using accurate time to trigger policy and routing changes was briefly discussed in~\cite{greenberg2005clean}, as well as in two work-in-progress papers~\cite{hotsdn,onstime}. The use of time for \emph{consistent} updates was discussed in~\cite{TimedConsistent}.
\else
Preliminary work-in-progress versions of the current paper introduced the concept of using time in SDN~\cite{hotsdn} and the flow swapping scenario~\cite{onstime}. The use of time for \emph{consistent} updates was discussed in~\cite{TimedConsistent}. TimeFlip~\cite{Infocom-TimeFlip} presented a practical method of implementing timed updates.
\fi
The current work is the first to present a generic protocol for performing timed updates in SDN, and the first to analyze \emph{flow swaps}, a natural application in which timed updates are the optimal update approach. 

\fi 

\subsection{Contributions}
The main contributions of this paper are as follows:
\begin{itemize}
	\item We consider a class of network update scenarios called \emph{flow swaps}, and show that simultaneous updates using synchronized clocks are provably the optimal approach of implementing them. In contrast, existing approaches for consistent updates (e.g.,~\cite{reitblatt2012abstractions,jin2014dynamic}) are not applicable to flow swaps, and other update approaches such as SWAN~\cite{hong2013achieving} and B4~\cite{jain2013b4} can perform flow swaps, but at the expense of increased resource overhead.
\ifdefined\TechReport
	\item We use game-theoretic analysis to show that flow swaps are inevitable in the dynamic nature of SDN. 
\fi
	\item We present the design, implementation and evaluation of a prototype that performs timed updates in OpenFlow. 
	\item Our work includes an extension to the OpenFlow protocol that has been approved by the ONF and integrated into OpenFlow 1.5~\cite{OpenFlow1.5}, and into the OpenFlow 1.3.x extension package~\cite{OpenFlow1.3ext}. 
\ifdefined\OpenSourceSoon
The source code of our prototype will soon be publicly available.
\else
The source code of our prototype is publicly available~\cite{TimedSDNSource}.
\fi
	\item We present experimental results that demonstrate the advantage of timed updates over existing approaches. Moreover, we show that existing update approaches (SWAN and B4) can be improved by using accurate time.
\ifdefined\TechReport
	\item Our experiments include an emulation of an SDN-controlled video swapping scenario, a real-life use case that has been shown~\cite{edwards2014using} to be infeasible with previous versions of OpenFlow, which did not include our time extension.
\fi
\end{itemize}


\ifdefined\TechReport
\else
\ifdefined\BlindRev
Due to space limits, some of the proofs and experimental results are presented in an anonymous technical report~\cite{TimeConfTR}.
\else
Due to space limits, some of the proofs and experimental results are presented in~\cite{TimeConfTR}.
\fi
\fi

\begin{sloppypar}
\section{The Lossless Flow Allocation (LFA) Problem}
\label{LFASec}
\end{sloppypar}
\subsection{Inevitable Flow Swaps}
Fig.~\ref{fig:Swap} presents a scenario in which it is necessary to \emph{swap} two flows, i.e., to update two switches at the same time. In this section we discuss the inevitability of flow swaps; we show that there does not exist a controller routing strategy that avoids the need for flow swaps. 

Our analysis is based on representing the flow-swap problem as an instance of an unsplittable flow problem, as illustrated in Fig.~\ref{fig:FlowGraph}. The topology of the graph in Fig.~\ref{fig:FlowGraph} models the traffic behavior to a given destination in common multi-rooted network topologies such as fat-tree and Clos (Fig.~\ref{fig:Flows}). 

\begin{figure}[!t]

	\centering
  \begin{subfigure}[t]{.2\textwidth}
  \centering
  \fbox{\includegraphics[height=11\grafflecm]{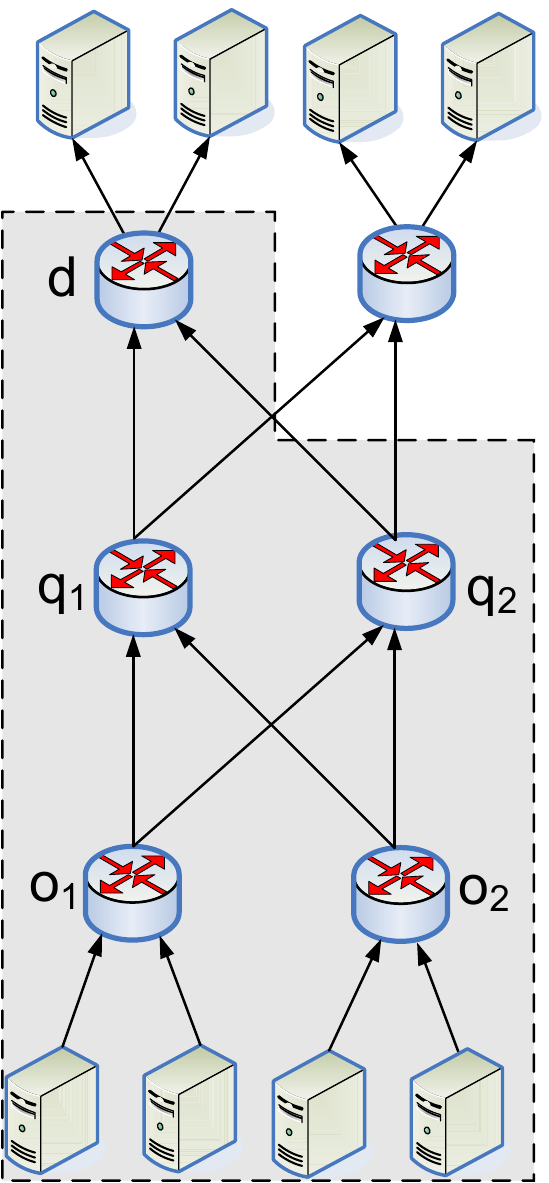}}
	\captionsetup{justification=centering}
  \caption{Clos network.}
  \label{fig:Flows}
  \end{subfigure}%
  \begin{subfigure}[t]{.3\textwidth}
	\centering
  \fbox{\includegraphics[height=11\grafflecm]{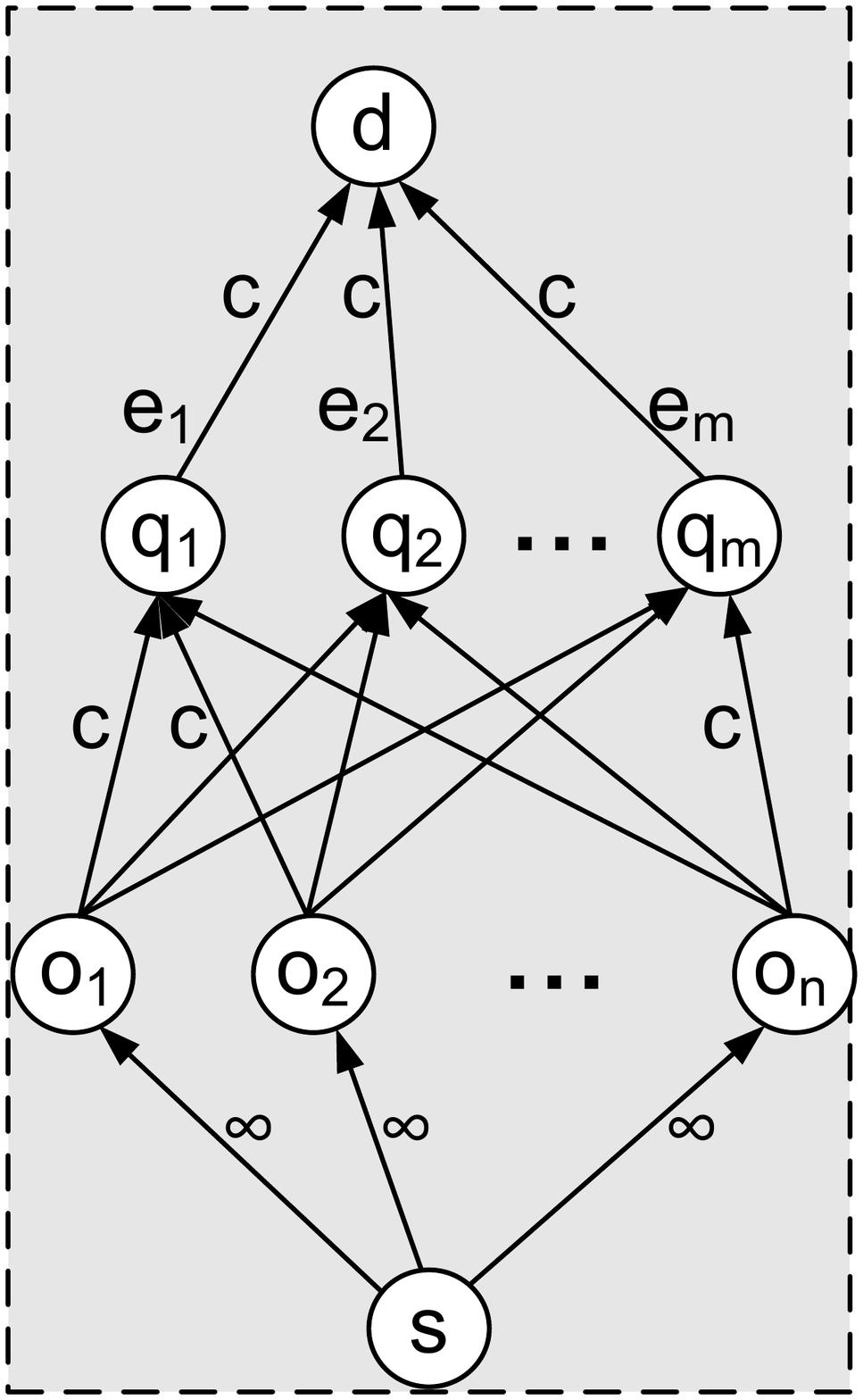}}
	\captionsetup{justification=centering}
  \caption{Unsplittable flow graph.}
  \label{fig:FlowGraph}
  \end{subfigure}%

  \caption{Modeling a Clos topology as an unsplittable flow graph.}
  \label{fig:SwapModel}
	\ifdefined\cutspace \vspace{-5mm} \fi
\end{figure}


The unsplittable flow problem~\cite{Kleinberg96} has been thoroughly discussed in the literature; given a directed graph, a source node $s$, a destination node $d$, and a set of flow demands (commodities) between $s$ and $d$, the goal is to maximize the traffic rate from the source to the destination. In this paper we define a \emph{game} between two players: a \textbf{source}\footnote{The source player does not represent a malicious attacker; it is an `adversary', representing the worst-case scenario.} that generates traffic flows (commodities) and a \textbf{controller} that reconfigures the network forwarding rules in a way that allows the network to forward all traffic generated by the source without packet losses. 


Our main argument, phrased in Theorem~\ref{SwapTheo}, is that the source has a strategy that \textbf{forces} the controller to perform a flow swap, i.e., to reconfigure the path of two or more flows at the same time. 
Thus, a scenario in which multiple flows must be updated at the \textbf{same time} is inevitable, implying the importance of timed updates. 

Moreover, we show that the controller can be forced to invoke $n$ individual commands that should optimally be performed at the same time. Update approaches that do not use time, also known as \textbf{untimed} approaches, cause the updates to be performed over a long period of time, potentially resulting in slow and possibly erratic response times and significant packet loss. Timed coordination allows us to perform the $n$ updates within a short time interval that depends on the scheduling error.

\begin{sloppypar}
Although our analysis focuses on the topology of~Fig.\ref{fig:FlowGraph}, it can be shown that the results are applicable to other topologies as well, where the source can force the controller to perform a swap over the edges of the min-cut of the graph. 
\end{sloppypar}

\subsection{Model and Definitions}
\label{ModelSec}
We now introduce the \emph{lossless flow allocation (LFA)} problem; it is not presented as an optimization problem, but rather as a game between two players: a \textbf{source} 
and a \textbf{controller}. 
As the source adds or removes flows (commodities), the controller reconfigures the forwarding rules so as to guarantee that all flows are forwarded without packet loss. \textbf{The controller's goal} is to find a forwarding path for all the flows in the system without exceeding the capacity of any of the edges, i.e., to completely avoid loss of packets from the given flows. \textbf{The source's goal} is to progressively add flows, without exceeding the network's capacity, forcing the controller to perform a flow swap.
We shall show that the source has a strategy that forces the controller to swap traffic flows simultaneously in order to avoid packet loss.

Our model makes three basic assumptions: (i) each flow has a \textbf{fixed bandwidth}, (ii) the controller strives to \textbf{avoid packet loss}, and (iii) flows are \textbf{unsplittable}. We discuss these assumptions further in Sec.~\ref{DiscussionSec}.

\ifdefined\TechReport
The term \emph{flow} in classic flow problems typically refers to the amount of traffic that is forwarded through each edge of the graph. Since our analysis focuses on SDN, we slightly divert from the common flow problem terminology, and use the term \emph{flow} in its OpenFlow sense, i.e., a set of packets that share common properties, such as source and destination network addresses. A flow in our context, can be seen as a session between the source and destination that runs traffic at a fixed rate.
\else
We use the term \emph{flow} in its OpenFlow sense, i.e., a set of packets that share common properties, such as source and destination network addresses. A flow in our context, can be seen as a session between the source and destination that runs traffic at a fixed rate.
\fi

The network is represented by a directed weighted acyclic graph (Fig.~\ref{fig:FlowGraph}), $G=(\mathbb{V},E,c)$, with a source $s$, a destination $d$, and a set of intermediate nodes, $\mathbb{V}_{in}$. Thus, $\mathbb{V}=\mathbb{V}_{in}\cup \{s,d\}$. The nodes directly connected to $s$ are denoted by $\mathbb{O}=\{o_1, o_2, \ldots, o_n\}$. Each of the outgoing edges from the source $s$ has an infinite capacity, whereas the rest of the edges have a capacity $c$. For the sake of simplicity, and without loss of generality, throughout this section we assume that $c=1$. Such a graph $G$ is referred to as an \emph{LFA graph}.

The source node progressively transmits traffic flows towards the destination node. Each flow represents a session between $s$ and $d$; every flow has a constant bandwidth, and cannot be split between two paths. A centralized controller configures the forwarding policy of the intermediate nodes, determining the path of each flow. Given a set of flows from $s$ to $d$, the controller's goal is to configure the forwarding policy of the nodes in a way that allows all flows to be forwarded to~$d$ without exceeding the capacity of any of the edges.

The set of flows that are generated by $s$ is denoted by $\mathbb{F} ::= \{ F_1, F_2, \ldots, F_k \}$. Each flow $F_i$ is defined as $F_i ::= (i, f_i, r_i)$, where $i$ is a unique flow index, $f_i$ is the bandwidth satisfying $0 < f_i \leq c$, and $r_i$ denotes the node that the controller forwards the flow to, i.e., $r_i \in  \{o_1, o_2, \ldots, o_n\}$. 

It is assumed that the controller monitors the network, and thus it is aware of the flow set $\mathbb{F}$. The controller maintains a forwarding function, $R_{con} : \mathbb{F} \times \mathbb{V}_{in} \longrightarrow \mathbb{V}_{in}\cup \{d\}$. Every node (switch) has a flow table, consisting of a set of \emph{entries}; an element $w \in \mathbb{F} \times \mathbb{V}_{in}$ is referred to as an \emph{entry} for short. An update of $R_{con}$ is defined to be a partial function $u : \mathbb{F} \times \mathbb{V}_{in} \rightharpoonup \mathbb{V}_{in}\cup \{d\}$.
We define a \emph{reroute} as an update $u$ that has a single entry in its domain. We call an update that has more than one entry in its domain a \emph{swap}, and it is assumed that all updates in a \emph{swap} are performed at the same time. We define a $k$-swap for $k \geq 2$ as a swap that updates entries in at least $k$ different nodes. Note that a $k$-swap is possible only if $n \geq k$, where $n$ is the number of nodes in $\mathbb{O}$.
We focus our analysis on $2$-swaps, and throughout the section we assume that $n \geq 2$. In Section~\ref{NSwapSec} we discuss $k$-swaps for values of $k>2$.

\subsection{The LFA Game}
The lossless flow allocation problem can be viewed as a game between two players, the source and the controller. The game proceeds by a sequence of steps; in each step the source either adds or removes a single flow (Fig.~\ref{fig:SrcProc}), and then waits for the controller to perform a sequence of updates (Fig.~\ref{fig:CtlProc}). The source's strategy $\mathbb{S}_s(\mathbb{F}, R_{con})=(a,F)$, is a function that defines for each flow set $\mathbb{F}$ and forwarding function $R_{con}$ for $\mathbb{F}$, a pair $(a,F)$ representing the source's next step, where $a \in \{\add, \remove \}$ is the action to be taken by the source, and $F=(j,f_j,r_j)$ is a single flow to be added or removed. The controller's strategy is defined by $\mathbb{S}_{con}(R_{con}, a, F)=\mathbb{U}$, where $\mathbb{U}=\{u_1,\ldots,u_\ell\}$ is a sequence of updates, such that (i) at the end of each update no edge exceeds its capacity, and (ii) at the end of the last update, $u_\ell$, the forwarding function $R_{con}$ defines a forwarding path for all flows in $\mathbb{F}$. Notice that when a flow is to be removed, the controller's update is trivial; it simply removes all the relevant entries from the domain of $R_{con}$. Hence our analysis focuses on \emph{adding} new flows.

\ifdefined\JournalVer
\begin{figure}[!t]
\else
\begin{figure}[!b]
\fi
\ifdefined\cutspace \vspace{-3mm} \fi
\hrule
\ifdefined\cutspace \vspace{-2mm} \fi
  \begin{codebox}
    \Procname{$\proc{Source Procedure}$}
		\li $\mathbb{F} \gets \emptyset$
		
		\li \textbf{repeat} at every step
		\li \ \ $(a,F) \gets \mathbb{S}_s(\mathbb{F}, R_{con})$
		\li \ \ \textbf{if} $a = \add$
		\li \ \ \ \ $\mathbb{F} \gets \mathbb{F} \cup F$
		\li \ \ \ \ Wait for the controller to complete updates
		\li \ \ \textbf{else} // $a = \remove$
		\li \ \ \ \ $\mathbb{F} \gets \mathbb{F} \setminus F$
  \end{codebox}
	\ifdefined\cutspace \vspace{-2mm} \fi
  \hrule
  \caption{The LFA game: the source's procedure.}
	\ifdefined\cutspace \vspace{-3mm} \fi
  \label{fig:SrcProc}
\end{figure}

\ifdefined\JournalVer
\begin{figure}[!h]
\else
\begin{figure}[!t]
\fi
\hrule
\ifdefined\cutspace \vspace{-2mm} \fi
  \begin{codebox}
    \Procname{$\proc{Controller Procedure}$}
		
		\li \textbf{repeat} at every step
		\li \ \ $\{u_1, \ldots,u_\ell \} \gets \mathbb{S}_{con}(R_{con}, a, F)$
		\li \ \ \textbf{for} $j \in [1,\ell]$
		\li \ \ \ \ Update $R_{con}$ according to $u_j$
  \end{codebox}
  \ifdefined\cutspace \vspace{-2mm} \fi
  \hrule
  \caption{The LFA game: the controller's procedure.}
	\ifdefined\cutspace \vspace{-7mm} \fi
  \label{fig:CtlProc}
\end{figure}

The following theorem, which is the crux of this section, argues that the source has a strategy that forces the controller to perform a swap, and thus that flow swaps are inevitable from the controller's perspective.

\begin{theorem}
\label{SwapTheo}
Let $G$ be an LFA graph. In the LFA game over $G$, there exists a strategy, $\mathbb{S}_s$, for the source that forces every controller strategy, $\mathbb{S}_{con}$, to perform a $2$-swap.
\end{theorem}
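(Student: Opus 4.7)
The plan is to exhibit a concrete source strategy that drives the game into a configuration from which the controller has no legal single-entry response to the next flow addition, so its update sequence must include an update whose domain spans entries at two distinct intermediate nodes. I would decompose the argument into a setup phase that pins the controller into a specific configuration despite its rerouting options, and a triggering phase in which a carefully chosen small flow is added.

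For the setup phase, I would use the capacity constraint to force spreading of flows across the $n$ outgoing nodes $o_i$. The source adds flows whose bandwidths are chosen so that any two of them exceed $c$ when placed together on the same path leaving a single $o_i$; such ``heavy'' flows are forced onto distinct outgoing nodes at the moment of insertion and cannot be consolidated by any subsequent reroute. By combining the heavy flows with lighter flows whose bandwidths still force spreading (each being larger than the residual capacity that would remain on any path after colocation), the source drives the game into a symmetric configuration $S^\star$ in which every existing flow on each path has bandwidth strictly larger than the residual capacity on every other path. From $S^\star$, no single-entry reroute of any existing flow is valid, since moving it would exceed the destination path's capacity.

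For the triggering phase, the source adds a small flow $F$ with bandwidth $\gamma$ chosen so that (i) $\gamma$ exceeds the per-path residual in $S^\star$, so $F$ cannot be placed directly on any path, and (ii) $\gamma$ is small enough that a global packing of all flows including $F$ still exists, realizable only through simultaneous reassignment of two or more existing flows across distinct paths. Combined with the frozen property of $S^\star$, this guarantees that no sequence of single-entry reroutes and single-entry placements by the controller accommodates $F$, and hence the controller's update sequence must contain a multi-entry update with entries at at least two different intermediate nodes, which is by definition a 2-swap. The main obstacle is showing that the controller cannot, through reroutes during its own turns earlier in the game, escape to an asymmetric ``clustered'' configuration that would permit a single-entry handling of $F$; I would address this by choosing the heavy flows so that their pairwise sums strictly exceed $c$ (making them truly unmovable), and by sequencing the source's additions and targeted removals so that at each step the controller's placement is uniquely determined by the remaining capacity, closing off every rerouting path away from $S^\star$ up to the trivial relabelling of the $n$ paths.
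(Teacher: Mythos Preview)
Your high-level plan---drive the system into a ``frozen'' configuration and then add a small trigger flow that cannot be accommodated by any sequence of single-entry reroutes---is exactly the shape of the paper's argument. But there are two genuine gaps.

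First, you conflate the outgoing nodes $o_1,\ldots,o_n$ with the capacitated paths. In the model the edges out of $s$ have infinite capacity and the \emph{source} chooses $r_i\in\{o_1,\ldots,o_n\}$ as part of the flow definition; nothing ``forces'' heavy flows onto distinct $o_i$. The binding capacity constraint is on the $m$ edges $e_1,\ldots,e_m$ into $d$, and it is the \emph{controller} who decides which $e_j$ each flow traverses. A $2$-swap is an update touching entries at two distinct $o_i$, so the source must deliberately spread its flows across at least two of the $o_i$ (by choice, not by force) so that any simultaneous exchange on the $e_j$ side necessarily touches two nodes.

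Second, and more seriously, your claim that the controller's placement can be made ``uniquely determined'' at each step is not substantiated, and in fact fails for the natural constructions. After you seat two heavy flows on distinct $e_j$, the next lighter flow typically fits on either edge, and the controller is free to choose. This is precisely why the paper's proof does \emph{not} try to force a unique configuration: it gives concrete bandwidths ($0.35,0.35,0.45,0.45$ on two nodes $o_1,o_2$), lets the controller route them, and then branches on the two essentially different outcomes---the symmetric routing (one $0.35$ and one $0.45$ per edge) and the clustered routing (both $0.35$'s together, both $0.45$'s together). In each case the source responds with a different trigger (a single $0.3$ flow in the symmetric case; two $0.2$ flows in the clustered case), and only then is a $2$-swap forced. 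Your proposal tries to eliminate this case split via ``targeted removals'' and pairwise-sum constraints, but you give no construction, and the heavy-flow idea alone does not pin down the controller's routing of the lighter flows. Without either a concrete sequence that genuinely removes the controller's freedom, or a case analysis covering all of the controller's options, the argument is incomplete.
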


\begin{proof}
Let $m$ be the number of incoming edges to the destination node $d$ in the LFA graph (see Fig~\ref{fig:FlowGraph}).
For $m=1$ the claim is trivial. Hence, we start by proving the claim for $m=2$, i.e., there are two edges connected to node $d$, edges $e_1$ and $e_2$.
We show that the source has a strategy that, regardless of the controller's strategy, forces the controller to use a swap. 
In the first four steps of the game, the source generates four flows, $F_1=(1,0.35,o_1)$, $F_2=(2,0.35,o_1)$, $F_3=(3,0.45,o_2)$, and $F_4=(4,0.45,o_2)$, respectively. According to the Source Procedure of Fig.~\ref{fig:SrcProc}, after each flow is added, the source waits for the controller to update $R_{con}$ before adding the next flow.
After the flows are added, there are two possible cases:
\begin{itemize}
	\item[(a)] The controller routes symmetrically through $e_1$ and $e_2$, i.e. a flow of $0.35$ and a flow of $0.45$ through each of the edges. In this case the source's strategy at this point is to generate a new flow $F_5=(5,0.3,o_1)$ with a bandwidth of $0.3$. The only way the controller can accommodate $F_5$ is by routing $F_1$ and $F_2$ through the same edge, allowing the new $0.3$ flow to be forwarded through that edge. Since there is no sequence of \emph{reroute} updates that allows the controller to reach the desired $R_{con}$, the only way to reach a state where $F_1$ and $F_2$ are routed through the same edge is to swap a $0.35$ flow with a $0.45$ flow. Thus, by issuing $F_5$ the controller forces a flow swap as claimed.
	\item[(b)] The controller routes $F_1$ and $F_2$ through one edge, and $F_3$ and $F_4$ through the other edge. In this case the source's strategy is to generate two flows, $F_6$ and $F_7$, with a bandwidth of $0.2$ each. The controller must route $F_6$ through the edge with $F_1$ and $F_2$. Now each path sustains a bandwidth of $0.9$ units. Thus, when $F_7$ is added by the source, the controller is forced to perform a swap between one of the $0.35$ flows and one of the $0.45$ flows. 
\end{itemize}
In both cases the controller is forced to 
perform a $2$-swap, swapping a flow from~$o_1$ with a flow from~$o_2$. This proves the claim for $m=2$.

The case of $m>2$ is obtained by reduction to $m=2$: the source first generates $m-2$ flows with a bandwidth of~$1$ each, causing the controller to saturate $m-2$ edges connected to node $d$ (without loss of generality $e_3, \ldots, e_m$). At this point there are only two available edges, $e_1$ and $e_2$. From this point, the proof is identical to the case of $m=2$.
\end{proof}

The proof of Theorem~\ref{SwapTheo} showed that the controller can be forced to perform a flow swap that involves $m=2$ paths. For $m>2$, we assumed that the source saturates $m-2$ paths, reducing the analysis to the case of $m=2$. In the following theorem we show that for $m>2$ the controller can be forced to perform $\lfloor \frac{m}{2} \rfloor$ swaps.

\begin{theorem}
\label{MSwapTheo}
Let $G$ be an LFA graph. In the LFA game over $G$, if $m>2$ then there exists a strategy, $\mathbb{S}_s$, for the source that forces every controller strategy, $\mathbb{S}_{con}$, to perform $\lfloor \frac{m}{2} \rfloor$ $2$-swaps.
\end{theorem}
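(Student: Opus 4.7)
The approach I would take is to iterate the five/six-flow construction from the proof of Theorem~\ref{SwapTheo} on $\lfloor m/2 \rfloor$ disjoint pairs of incoming edges to $d$, using bandwidth-$1$ \emph{anchor} flows to channel each instance of the construction into the intended pair. The source begins by issuing $m-2$ anchor flows of bandwidth~$1$; each such flow fully occupies one incoming edge to $d$, so the controller must saturate some $m-2$ of the $m$ edges. The two remaining unsaturated edges are declared the first pair. The source then runs the construction of Theorem~\ref{SwapTheo} on that pair. Because every other incoming edge to $d$ has zero spare capacity and every flow of the construction has bandwidth at least $0.2$, the unsplittable constraint forces all construction flows onto the chosen pair, and the argument of Theorem~\ref{SwapTheo} applies verbatim to force a $2$-swap there.

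After the swap on pair $i$, the loads on its two edges are $\{1.0,0.9\}$ (case~(a)) or $\{1.0,1.0\}$ (case~(b)), so each retains at most $0.1$ spare capacity. To obtain the next $2$-swap, the source inspects $R_{con}$ to locate two of its anchor flows and removes them, thereby freeing precisely the two edges those anchors had occupied; these become pair~$i+1$. The first pair's edges now have less than $0.2$ of spare capacity, while the remaining $m-2i$ anchor flows still saturate $m-2i$ other edges, so the freshly issued construction flows can route only onto pair~$i+1$, and a second $2$-swap is forced. Repeating this $\lfloor m/2\rfloor$ times gives the claimed count. A parity check confirms that for $m$ even all $m-2$ anchors are eventually removed, while for $m$ odd exactly one anchor remains on the unpaired edge throughout the game.

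The main obstacle I expect is ruling out the possibility that when faced with the Theorem~\ref{SwapTheo} construction for some pair~$i$, the controller evades the forced $2$-swap by rerouting \emph{across} pair boundaries, e.g.\ by moving a flow from a previously processed pair into pair~$i$ to relieve the local bottleneck. I would dispense with this in two steps. First, any such cross-pair rearrangement that actually resolves a capacity clash must itself update entries in at least two distinct nodes simultaneously (otherwise a transient edge would exceed its capacity), so it is itself a $2$-swap and can be charged to the forced tally. Second, since all flows in question have bandwidth at least $0.2$ and each previously processed pair retains at most $0.1$ of spare capacity on each of its edges, moving an unsplittable flow out of pair~$i$ into any earlier pair is impossible, and the Theorem~\ref{SwapTheo} contradiction inside pair~$i$ cannot be dodged. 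Combining these observations yields $\lfloor m/2\rfloor$ forced $2$-swaps.
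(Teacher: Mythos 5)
There is a genuine gap in your iteration step, and it is exactly where you anticipated trouble: the cross-pair evasion argument does not hold. Your ``first step'' claims that any cross-pair rearrangement resolving a capacity clash must itself be a $2$-swap, but the controller does not need to wait for a capacity clash. In the LFA game the controller responds to \emph{every} source action, including your anchor removals, with an arbitrary sequence of legal updates; at the moment you remove the two anchors for pair $i+1$, the freed edges are completely empty, so rerouting any single flow from a previously processed pair onto a freed edge is a legal single-entry reroute with no transient oversubscription --- not a swap. Concretely, after case~(a) on pair $1$ the loads are $\{0.35,0.35,0.3\}$ (load $1.0$) and $\{0.45,0.45\}$ (load $0.9$); once the next pair is freed, the controller moves each $0.45$ flow, one at a time, onto a freed edge, leaving an empty edge inside pair $1$. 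Now the construction of Theorem~\ref{SwapTheo} does not ``apply verbatim'' to the fresh pair: its edges are no longer empty, and the slack that leaked back into earlier pairs lets the controller absorb the critical flow (the $0.3$ in case~(a), or the second $0.2$ in case~(b)) by reroutes alone, so no $2$-swap is forced. Your ``second step'' only rules out moves \emph{from} the fresh pair \emph{into} earlier pairs, which is the wrong direction; the damaging moves go the other way.

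The paper's proof avoids this invariant-maintenance problem entirely with a one-shot global construction. The source issues $m$ flows of bandwidth $0.35$, $m$ of $0.45$, and $m$ of $0.2$; since total demand equals total capacity, every edge must be exactly full, and the only multiset of these values summing to $1$ that is consistent with the flow counts is one flow of each size per edge. The source then removes all $m$ flows of $0.2$, leaving every edge at load $0.8$ --- a configuration in which \emph{no} single reroute anywhere is legal, because the smallest remaining flow ($0.35$) exceeds every edge's slack ($0.2$), so the controller cannot preemptively rebalance. Only then does the source add $\lfloor \frac{m}{2} \rfloor$ flows of bandwidth $0.3$, each of which forces a fresh $2$-swap exactly as in case~(a) of Theorem~\ref{SwapTheo}; odd $m$ is handled by first saturating one edge with a bandwidth-$1$ flow. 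To salvage your sequential anchor scheme you would need an invariant that survives the controller's arbitrary reroutes after every removal step, which is precisely what the globally tight, reroute-frozen configuration provides for free.
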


\begin{proof}
Assume that $m$ is even. The source generates $m$ flows with a bandwidth of $0.35$, $m$ flows with a bandwidth of $0.45$, and $m$ flows with a bandwidth of $0.2$. 
The only way the controller can route these flows without packet loss is as follows: each path sustains three flows with three different bandwidths, $0.2$, $0.35$, and $0.45$.
Now the source removes the $m$ flows of $0.2$, and adds $\frac{m}{2}$ flows of $0.3$. As in case (a) of the proof of Theorem~\ref{SwapTheo}, adding each flow of $0.3$ causes a $2$-swap. The controller is thus is forced to perform $\frac{m}{2}=\lfloor \frac{m}{2} \rfloor$ swaps.

If $m$ is odd, then the source can saturate one of the edges by generating a flow with a bandwidth of $1$, and then repeat the procedure above for the remaining $m-1$ edges, yielding $\frac{m-1}{2}=\lfloor \frac{m}{2} \rfloor$ swaps.
\end{proof}

For simplicity, throughout the rest of this section we assume that $m=2$. However, as in Theorem~\ref{MSwapTheo}, the analysis can be extended to the case of $m>2$.

\subsection{The Impact of Flow Swaps}
We define a \textbf{metric} for flow swaps, by considering the oversubscription that is caused if the flows are \textbf{not} swapped simultaneously, but updated using an untimed approach.

\ifdefined\TechReport
\else
\begin{figure*}[t]
  \centering
  \fbox{\includegraphics[width=.8\textwidth]{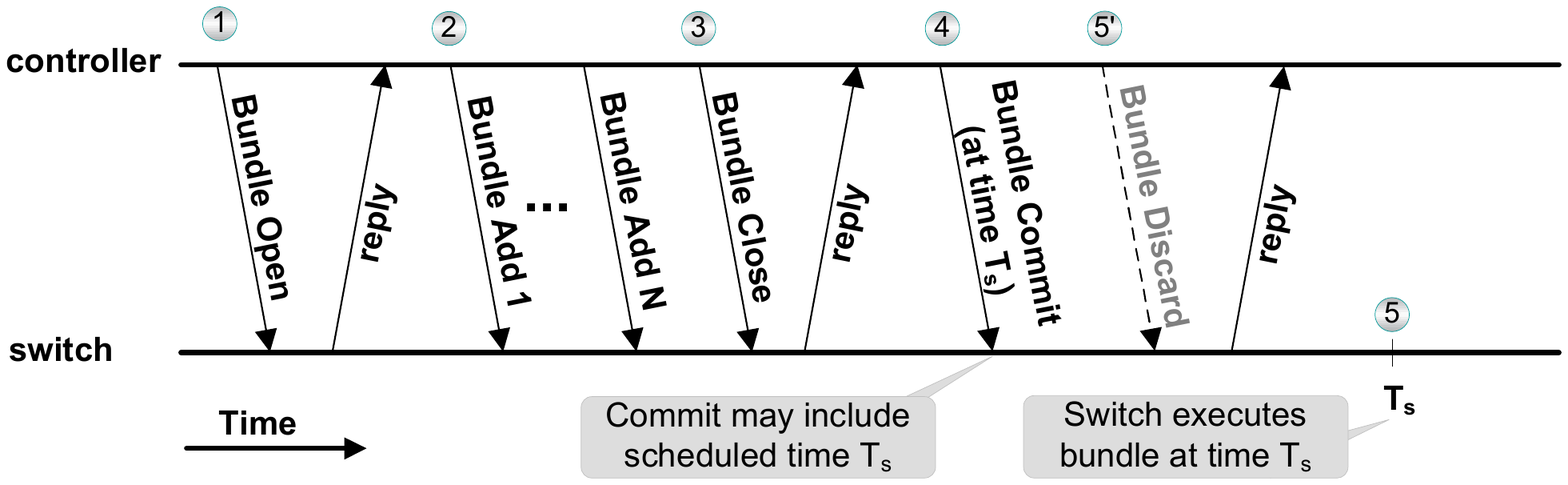}}
	\captionsetup{justification=raggedright}
  \caption{A \emph{Scheduled Bundle}: the \emph{Bundle Commit} message may include $T_s$, the scheduled time of execution. The controller can use a \emph{Bundle Discard} message to cancel the \emph{Scheduled Bundle} before $T_s$.}
  \label{fig:Bundle}
\end{figure*}
\fi

We define the \emph{oversubscription} of an edge, $e$, with respect to a forwarding function, $R_{con}$, to be the difference between the total bandwidth of the flows forwarded through~$e$ according to $R_{con}$, and the capacity of $e$. If the total bandwidth of the flows through~$e$ is less than the capacity of $e$, the oversubscription is defined to be zero.

\begin{definition}[Flow swap impact]
Let $\mathbb{F}$ be a flow set, and $R_{con}$ be the corresponding forwarding function. Consider a $2$-swap $u : \mathbb{F} \times \mathbb{V} \rightharpoonup \mathbb{V}\cup \{d\}$, such that $u=u_1 \cup u_2$, where $u_i=(w_i, v_i)$, for $w_i \in \mathbb{F} \times \mathbb{V}$, $v_i \in \mathbb{V}\cup\{d\}$, and $i \in \{1,2\}$. The \emph{impact} of $u$ is defined to be the minimum of: (i)~The oversubscription caused by applying $u_1$ to $R_{con}$, or (ii) the oversubscription caused by applying $u_2$ to $R_{con}$.
\end{definition}

\ifdefined\TechReport
\begin{example}
We observe the scenario described in the proof of Theorem~\ref{SwapTheo}, and consider  what would happen if the two flows had not been swapped simultaneously. The scenario had two cases;
in the first case, the bandwidth through each edge is $0.8$ before the controller swaps a $0.35$ flow with a $0.45$ flow. Thus, if the $0.35$ flow is rerouted and then the $0.45$ flow, the total bandwidth through the congested edge is $0.8+0.35=1.15$, creating a temporary oversubscription of $0.15$. Thus, the flow swap impact in the first case is $0.15$. In the second case, one edge sustains a bandwidth of $0.7$, and the other a bandwidth of $0.9$. The controller needs to swap a $0.35$ flow with a $0.45$ flow. If the controller first reroutes the $0.45$ flow, then during the intermediate transition period, the congested edge sustains a bandwidth of $0.7+0.45=1.15$, and thus it is oversubscribed by $0.15$. Hence, the impact in the second case is also $0.15$. 
\end{example}
\fi

The following theorem shows that in the LFA game, the source can force the controller to perform a flow swap with a swap impact of roughly $0.5$. 

\begin{theorem}
\label{SwapImpactTheorem}
Let $G$ be an LFA graph, and let $0 < \alpha < 0.5$. In the LFA game over $G$, there exists a strategy, $\mathbb{S}_s$, for the source that forces every controller strategy, $\mathbb{S}_{con}$, to perform a swap with an impact of $\alpha$.
\end{theorem}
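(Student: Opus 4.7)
The plan is to generalize the construction of Theorem~\ref{SwapTheo} so that the swap the source ultimately forces has impact exactly $\alpha$ rather than $0.15$. I would choose parameters $p<q$ satisfying $2p+q=1+\alpha$, $p+q<1$, and $2q<1$; a short arithmetic check shows that these constraints are simultaneously satisfiable (with $p$ in roughly $(0.25+\alpha/2,\,(1+\alpha)/3)$) exactly when $\alpha<0.5$. Intuitively, the first equation pegs the impact to $\alpha$, the second keeps the symmetric configuration feasible, and the third keeps the asymmetric configuration (and the final target in the symmetric branch) feasible; the boundary $\alpha=0.5$ is precisely where the feasible $p$-interval collapses to a point, which is why the hypothesis $\alpha<0.5$ is needed.

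The source's strategy follows the template of Theorem~\ref{SwapTheo}. It first generates $F_1,F_2$ of size $p$ at $o_1$ and $F_3,F_4$ of size $q$ at $o_2$. Since $2p+q=1+\alpha>1$ and $p+2q>1$, the only controller configurations that avoid loss are the symmetric one $\{p,q\},\{p,q\}$ and the asymmetric one $\{p,p\},\{q,q\}$; the source then branches on which of these the controller reaches. In the symmetric case it issues one extra flow $F_5$ of size $c\in(1-p-q,\,1-2p]$; $c$ fits on neither edge, and the unique valid target $\{p,p,c\},\{q,q\}$ is reachable only via a $2$-swap of a $p$-entry at $o_1$ and a $q$-entry at $o_2$. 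Applying only the $p$-half gives oversubscription $2p+q-1=\alpha$; applying only the $q$-half gives $p+2q-1\geq\alpha$ (using $p\leq q$); so the impact is exactly $\alpha$. In the asymmetric case the source issues two extra flows of size $1-p-q$: the first fits on the $p$-edge only, and then the second fits on neither edge, forcing the analogous $2$-swap. A direct calculation of the two update orderings gives impact $p+2q-1=\alpha+(q-p)\geq\alpha$.

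Combining the two branches proves the theorem, since under every controller strategy the forced swap has impact at least $\alpha$. The hard part will be the asymmetric branch: one must design the two-flow continuation so that its unique feasible final assignment is the symmetric-plus-extras one (neither ``both $c$'s on one edge'' nor ``one $c$ with each $q$'' must be legal), so that the controller is actually forced into a swap, and so that the impact is still at least $\alpha$. This is what drives the strict inequality $2q<1$ and the specific size $1-p-q$ for the two extra flows, and it is the place where one must carefully enumerate and rule out all alternative controller responses.
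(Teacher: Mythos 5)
Your proposal is correct and is essentially the paper's own proof in a cleaner parametrization: the paper fixes $\epsilon=0.1-0.2\alpha$ and uses $p=0.5-2\epsilon$, $q=0.5-\epsilon$, $c=4\epsilon$, and two extra flows of $3\epsilon=1-p-q$, which is exactly one point of your feasible region (your constraints $2p+q=1+\alpha$, $p+q<1$, $2q<1$ all hold for these values), with the same two-branch case analysis inherited from Theorem~\ref{SwapTheo}. Even your admitted looseness in the asymmetric branch (impact $\geq\alpha$ rather than exactly $\alpha$) matches the paper, whose case~(b) likewise only guarantees at least $\alpha$.
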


\ifdefined\TechReport
\begin{proof}
Let $\epsilon = 0.1-0.2 \cdot \alpha$.
We use the source's strategy from the proof of Theorem~\ref{SwapTheo}, with the exception that the bandwidths $f_1, \ldots, f_7$ of flows $F_1, \ldots, F_7$ are: $f_1=f_2=0.5-2 \epsilon$, $f_3=f_4=0.5-\epsilon$, $f_5=4 \epsilon$, and $f_6=f_7=3 \epsilon$.

As in the proof of Theorem~\ref{SwapTheo}, there are two possible cases. In case (a), the controller routes symmetrically through the two paths, utilizing $1-3 \epsilon$ of the bandwidth of each path. The source adds $F_5$ in response. To accommodate $F_5$ the controller swaps $F_1$ and $F_3$. We determine the impact of this swap by considering the oversubscription of performing an untimed update; the controller first reroutes $F_1$, and only then reroutes $F_3$. Hence, the temporary oversubscription is $1 - 3 \epsilon + 0.5-2 \epsilon - 1 = 1.5 - 5 \epsilon - 1$. Thus, the impact is $0.5-5 \epsilon = \alpha$. In case (b), the controller forwards $F_1$ through the same path as $F_2$, and $F_3$ through the same path as $F_4$. The source responds by generating $F_6$ and $F_7$. Again, the controller is forced to swap between $F_1$ and $F_3$. We compute the impact by considering an untimed update, where the controller reroutes $F_3$ first, causing an oversubscription of $1 - 4 \epsilon + 0.5 - \epsilon -1 = 0.5 - 5 \epsilon = \alpha$.
In both cases the source inflicts a flow swap with an impact of $\alpha$.
\end{proof}
\else
The proof is presented in~\cite{TimeConfTR}.
\fi

Intuitively, Theorem~\ref{SwapImpactTheorem} shows that not only are flow swaps inevitable, but they have a high impact on the network, as they can cause links to be congested by roughly $50\%$ beyond their capacity.

\ifdefined\ShortVersion
\else

\subsection{Network Utilization}
\label{NetUtilSec}
Theorem~\ref{SwapTheo} demonstrates that regardless of the controller's policy, flow swaps cannot be prevented. However, the proof of Theorem~\ref{SwapTheo} uses a scenario in which the edges leading to node $d$ are almost fully utilized, suggesting that perhaps flow swaps are inevitable only when the traffic bandwidth is nearly equal to the max-flow of the graph. Arguably, as suggested in~\cite{hong2013achieving}, by reserving some scratch capacity $\nu \cdot c$ through each of the edges, for $0 < \nu < 1$, it may be possible to avoid flow swaps. In the next theorem we show that if $\nu < \frac{1}{3}$, then flow swaps are inevitable.

\begin{theorem}
\label{ScratchTheorem}
Let $G$ be an LFA graph, in which a scratch capacity of $\nu$ is reserved on each of the edges $e_1,\ldots,e_m$, and let $\nu < \frac{1}{3}$. In the LFA game over $G$, there exists a strategy for the source, $\mathbb{S}_s$, that forces every controller strategy, $\mathbb{S}_{con}$, to perform a swap.
\end{theorem}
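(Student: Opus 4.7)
The plan is to adapt the proof of Theorem~\ref{SwapTheo} by rescaling the flow bandwidths so as to account for the scratch capacity $\nu$. With the scratch present, a single-entry reroute by the controller need only keep each edge below the full capacity $1$, while only the \emph{final} state is required to respect the stricter steady-state bound $1-\nu$. To force a swap it must therefore be arranged that \emph{every} intermediate configuration reachable by some sequence of single-entry reroutes overflows the full capacity $1$ on some edge, not merely the reduced bound $1-\nu$. This stronger constraint is what the source's strategy must engineer.

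Following the blueprint of Theorem~\ref{SwapTheo}, the source issues four flows $F_1, F_2$ of size $a$ routed through $o_1$ and $F_3, F_4$ of size $b$ routed through $o_2$, with $a<b$, picking $a$ and $b$ so that (i) the only legitimate steady configurations --- the symmetric $(a+b,\,a+b)$ and the paired $(2a,\,2b)$ --- both respect $1-\nu$ per edge, and (ii) $2a+b>1$ and $a+2b>1$, so that any sequential migration of an $a$-flow or a $b$-flow across the cut overflows the full capacity $1$ even when the scratch is exploited. Combined with $2a, 2b \le 1-\nu$ and $a+b\le 1-\nu$, conditions (ii) yield $a,b \in (\nu,\,(1-\nu)/2]$, an interval that is non-empty exactly when $\nu<(1-\nu)/2$, i.e.\ $\nu<1/3$, which is the hypothesis of the theorem.

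Once such $a,b$ are fixed, the source mimics the case analysis of Theorem~\ref{SwapTheo}: in case~(a) (controller in symmetric state), it issues a fifth flow $F_5$ of size $e$ chosen small enough that the swapped target $(2a+e,\,2b)$ is steady; in case~(b) (controller in paired state), it issues two further flows $F_6, F_7$ of a common size $g \approx 1-\nu-a-b$ chosen so that the balanced target $(a+b+g,\,a+b+g)$ is steady while neither edge can absorb $F_7$ once $F_6$ has been placed. The main obstacle is to verify, in each case, that no single-entry update sequence leads the system to a steady configuration: the forcing conditions $2a+b>1$ and $a+2b>1$ prohibit moving any $a$- or $b$-flow across the cut, and the sizes of $e$ and $g$ are tuned so that every direct placement of the new flow either overflows $1$ or leaves a non-steady state from which every further single-entry reroute is likewise blocked --- the critical sub-check being that in case~(b), relocating $F_6$ itself reaches a dead-end from which no placement of $F_7$ is steady. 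The controller therefore has no alternative to committing a simultaneous multi-entry update, i.e.\ a swap, establishing the claim.
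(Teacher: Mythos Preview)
Your proposal is correct and closely parallels the paper's argument, though you arrive at it by a somewhat different route. The paper's proof works by reduction to Theorem~\ref{SwapImpactTheorem}: it rescales that theorem's construction to the graph $G'$ with edge capacity $1-\nu$, notes that the forced swap there has impact arbitrarily close to $\tfrac{1}{2}(1-\nu)$, and observes that for $\nu<\tfrac13$ this exceeds the scratch $\nu$, so that even with the full capacity $1$ available at intermediate steps the sequential swap would overflow. Your plan instead returns to the blueprint of Theorem~\ref{SwapTheo} and picks the base bandwidths $a,b$ directly, building in from the outset the strengthened inequalities $2a+b>1$ and $a+2b>1$ (rather than merely $>1-\nu$) that render every single-entry move of an $a$- or $b$-flow illegal even when the scratch is used. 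The two constructions coincide numerically --- your $a,b,e,g$ are exactly the $f_1,\ldots,f_7$ of Theorem~\ref{SwapImpactTheorem} scaled by $1-\nu$ --- but your presentation is more explicit about the semantics of scratch: you isolate precisely why the extra $\nu$ of intermediate headroom does not help the controller, whereas the paper leans on the impact bound and leaves to the reader the check that no longer reroute sequence (e.g., first parking $F_5$, or shuffling $F_6$) circumvents the swap. One small point to tighten: you derive that $a,b\in(\nu,\,(1-\nu)/2]$ is \emph{necessary} and that this interval is nonempty iff $\nu<\tfrac13$, but nonemptiness alone is not sufficient --- the binding constraint $2a+b>1$ with $a<b\le(1-\nu)/2$ forces $a>(1+\nu)/4$, which is compatible with $a<(1-\nu)/2$ precisely when $\nu<\tfrac13$, so the feasibility argument should be phrased through that tighter window.
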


\ifdefined\TechReport
\begin{proof}
We consider a graph $G'$, in which the capacity of each of the edges $e_1, \ldots, e_m$ is $1-\nu$. By Theorem~\ref{SwapImpactTheorem}, for every $0 < \alpha < 0.5$, there exists a strategy for the source that forces a flow swap with an impact of $\alpha$. Thus, there exists a strategy that forces at least one of the edges to sustain a bandwidth of $\alpha \cdot (1-\nu)$. Since $\nu < \frac{1}{3}$, we have $(1-\nu)>\frac{2}{3}$, and thus there exists an $\alpha < 0.5$ such that $\alpha \cdot (1-\nu) > 1$. It follows that in the original graph $G$, with scratch capacity $\nu$, there exists a strategy for the source that forces the controller to perform a flow swap in order to avoid the oversubscribed bandwidth of $\alpha \cdot (1-\nu) > 1$.
\end{proof}
\else
The proof is presented in~\cite{TimeConfTR}.
\fi



The analysis of~\cite{hong2013achieving} showed that a scratch capacity of~10\% is enough to address the reconfiguration scenarios that were considered in that work. Theorem~\ref{ScratchTheorem} shows that even a scratch capacity of $33 \frac{1}{3}$\% does not suffice to prevent flow swaps scenarios. It follows that the 10\% reserve that~\cite{hong2013achieving} suggest may not be sufficient in general for lossless reconfiguration. 

\fi

\ifdefined\TechReport
\begin{figure*}[htbp]
  \centering
  \fbox{\includegraphics[width=.8\textwidth]{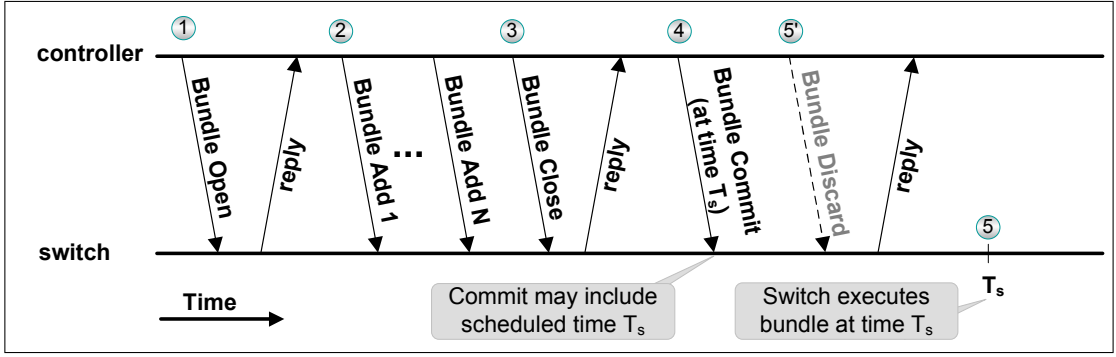}}
	\captionsetup{justification=raggedright}
  \caption{A \emph{Scheduled Bundle}: the \emph{Bundle Commit} message may include $T_s$, the scheduled time of execution. The controller can use a \emph{Bundle Discard} message to cancel the \emph{Scheduled Bundle} before time $T_s$.}
  \label{fig:Bundle}
\end{figure*}
\fi

\subsection{n-Swaps}
\label{NSwapSec}
As defined above, a $k$-swap is a swap that involves $k$ or more nodes. In previous subsections we discussed $2$-swaps. The following theorem generalizes Theorem~\ref{SwapTheo} to $n$-swaps, where $n$ is the number of nodes in $\mathbb{O}$.

\begin{theorem}
\label{NSwapTheo}
Let $G$ be an LFA graph. In the LFA game over $G$, there exists a strategy, $\mathbb{S}_s$, for the source that forces every controller strategy, $\mathbb{S}_{con}$, to perform an $n$-swap.
\end{theorem}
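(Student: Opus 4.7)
The plan is to extend the construction from the proof of Theorem~\ref{SwapTheo} from two paths to $n$ paths. As in that proof, we may assume without loss of generality that the number of edges entering $d$ satisfies $m \geq n$; otherwise the source first saturates $m-n$ such edges with flows of bandwidth~$1$, effectively reducing the graph to one with $m=n$ available edges. From now on we assume $m=n$.

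With $m=n$, the source proceeds in two phases. In Phase~1, for each $i \in \{1, \ldots, n\}$ the source generates a pair of flows at $o_i$ with bandwidths $a$ and $b$ satisfying $a+b < 1$, $a+a > 1$, and $b+b > 1$, so that each edge can carry at most one $a$-flow and at most one $b$-flow. The pattern of generation is chosen (mirroring case~(a) of Theorem~\ref{SwapTheo}) so that the only capacity-feasible routing places exactly one $a$-flow and one $b$-flow on each edge, with the two flows on a given edge originating from \emph{different} nodes $o_i$. This yields a ``spread'' configuration analogous to the symmetric case in Theorem~\ref{SwapTheo}.

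In Phase~2, the source adds triggering flows whose bandwidths are chosen so that the unique capacity-feasible post-update configuration is the ``aligned'' one in which each edge carries the $a$-flow and the $b$-flow originating from the same~$o_i$. Transitioning from the spread configuration to the aligned one requires each of the $n$ source-adjacent nodes to update its forwarding entry for one of its two flows. If Phase~1 is made tight enough (each edge is loaded near capacity), then no proper subset of these $n$ updates, executed in any order, can be performed without oversubscribing some edge in an intermediate state. Hence the controller is forced to perform all $n$ updates simultaneously, which by definition constitutes an $n$-swap on the nodes of $\mathbb{O}$.

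The main obstacle is verifying this last step: ruling out the possibility that the controller responds with a sequence of reroutes and $k$-swaps with $k<n$ that sidesteps the need for simultaneity, perhaps by exploiting auxiliary intermediate nodes or some residual symmetry in the spread configuration. This requires a careful enumeration of partial-update sequences and their intermediate edge loads, generalizing the bookkeeping in the proof of Theorem~\ref{SwapTheo} to arbitrary~$n$. The key technical lemma to establish is that every partial update out of a sufficiently tight spread configuration must oversubscribe at least one edge, which together with the uniqueness of the aligned configuration enforces the simultaneity of all $n$ updates.
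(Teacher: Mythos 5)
Your proposal has two genuine gaps, and the first is in the very first step. The ``without loss of generality $m \geq n$'' reduction is backwards: saturating edges with rate-$1$ flows can only \emph{decrease} the number of usable edges into $d$, so the source can reduce $m$ down to $2$ but can never raise it up to $n$. The theorem must hold for every LFA graph, and in particular for graphs with $m = 2$ and $n$ large --- which is exactly the regime the paper's proof targets. The paper keeps the bottleneck at two edges $e_1, e_2$ and instead spreads a \emph{group} of flows across all $n$ nodes of $\mathbb{O}$: two flows of rate $h$ through $o_1$, a group of $n$ flows of rate $\frac{g}{n}$ (one per node of $\mathbb{O}$), and a group of $n-1$ flows of rate $\frac{g}{n-1}$ (one through each of $o_2,\ldots,o_n$), with $\frac{1}{3} < h < g < \frac{1}{2}$ and $g > (n^2-n)(1-2h)$. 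An integrality argument (any partial exchange of $i$ flows of one group with $j$ of the other forces $|jn - in + i| < 1$, hence $i=n$, $j=n-1$) shows each group must be routed, and moved, as a single unit; exchanging a whole group with a rate-$h$ flow then touches all $n$ nodes at once, which is the desired $n$-swap. Your construction offers no route to this case at all.

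The second gap is that even when $m = n$ happens to hold, your two-phase pair construction cannot work as described. Since all $a$-flows have identical bandwidth and all $b$-flows have identical bandwidth, your ``spread'' and ``aligned'' configurations induce exactly the same load on every edge (one $a$ plus one $b$); they differ only in the labels of which origin's flow sits where. Consequently no triggering flow can be feasible under the aligned configuration yet infeasible under the spread one, so Phase~2 cannot force the transition, and the ``unique capacity-feasible post-update configuration'' you posit does not exist --- any permutation of $a$-flows among edges (and of $b$-flows among edges) is equally feasible. Worse, even if some rearrangement were forced, a permutation of equal-bandwidth flows decomposes into pairwise exchanges, each of which is load-neutral at the endpoints and hence realizable as a $2$-swap; the controller is never compelled to update all $n$ nodes in one simultaneous operation. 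So the ``key technical lemma'' you defer is not merely unproven --- for this construction it is false. The moral from the paper's proof is that forcing an $n$-swap requires flows of \emph{distinct granularities} ($\frac{g}{n}$ versus $\frac{g}{n-1}$) whose mismatch creates an integrality obstruction ruling out every partial exchange, rather than symmetric pairs whose interchangeability is precisely what lets the controller escape.
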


\ifdefined\TechReport
\begin{proof}
For $n=1$, the claim is trivial. For $n=2$, the claim was proven in Theorem~\ref{SwapTheo}. Thus, we assume $n \geq 3$.

If $m>2$, the source first generates $m-2$ flows with a rate~$c$ each, and we assume without loss of generality that after the controller allocates these flows only $e_1$ and $e_2$ remain unused. Thus, we focus on the case where $m=2$.

We describe a strategy, $\mathbb{S}_s$ as required; $s$ generates three types of flows:

\begin{itemize}
	\item Type A: two flows $F_1, F_2$, at a rate of $h$ each: $F_1=(1,h,o_1)$, and $F_2=(2,h,o_1)$.
	\item Type B: $n$ flows, $F_3, \ldots, F_{n+2}$, with a total rate $g$, i.e., at a rate of $\frac{g}{n}$ each. The source sends each of the $n$ flows through a different node of $\mathbb{O}$.
	\item Type C: $n-1$ flows, $F_{n+3}, \ldots, F_{2n+1}$ with a total rate $g$, i.e.,  $\frac{g}{n-1}$ each. The source sends each of the $n-1$ flows through a different node of $o_2, \ldots, o_n$.
\end{itemize}

We define $h$ and $g$ such that:

\begin{equation}
\label{eq:hg1}
\frac{1}{3} < h < g < \frac{1}{2}
\end{equation}

\begin{equation}
\label{eq:hg2}
g > (n^2-n)(1-2h)
\end{equation}

We claim that for every $n$ there exist $g$ and $h$ that satisfy (\ref{eq:hg1}) and (\ref{eq:hg2}). We prove this claim by finding $g$ and $h$ that satisfy the two conditions. We choose an arbitrary $g$ in the range $(\frac{11}{24},\frac{1}{2})$. We find a valid $h$ by solving $g>(n^2-n)(1-2h)$. The latter yields $h>\frac{1}{2}- \frac{\alpha}{2(n^2-n)}$. Since $n\geq 3$, we have $n^2-n\geq 6$, and thus $\frac{g}{2(n^2-n)}<\frac{0.5}{2\times 6}=\frac{1}{24}$. Clearly, $\frac{g}{2(n^2-n)}>0$. It follows that every $h$ that satisfies $\frac{1}{2}- \frac{1}{24} < h < \frac{1}{2}-0$, also satisfies $h> \frac{1}{3}$. Hence, every $g$ and $h$ in the range $(\frac{11}{24},\frac{1}{2})$ that satisfy $h<g$, also satisfy (\ref{eq:hg1}) and (\ref{eq:hg2}).

Intuitively, for $h$ and $g$ sufficiently close to $\frac{1}{2}$ (but less than $\frac{1}{2}$) (\ref{eq:hg1}) and (\ref{eq:hg2}) are satisfied.
\\

We now prove that after generating the flows $F_1,\ldots,F_{2n+1}$, the function $R_{con}$ forwards all type B flows through the same path, and all type C flows through the same path.
Assume by way of contradiction that there is a forwarding function $R_{con}$ that forwards flows $F_1,\ldots,F_{2n+1}$ without loss, but does not comply to the latter claim. We consider two distinct cases: either the two type A flows are forwarded through the same edge, or they are forwarded through two different edges.
\begin{itemize}
	\item If the two type A flows are forwarded through two different paths, then we assume that $F_1$ and the $n$ type B flows are forwarded through $e_1$ and that $F_2$ and the $n-1$ type C flows are forwarded through $e_2$. Thus, at this point each of the two edges sustains traffic at a rate of $g+h$. By the assumption, there exists an update that swaps $i < n$ flows of type B with $j<n-1$ flows of type C, such that after the swap none of the edges exceeds its capacity. Thus, the update adds the bandwidth $|j\cdot \frac{g}{n-1} - i \cdot \frac{g}{n}|$ to one of the edges, and this additional bandwidth must fit into the available bandwidth before the update, $1-g-h$. Hence, $|j\cdot \frac{g}{n-1} - i \cdot \frac{g}{n}|<c-g-h$. Note that $1-g-h<1-2h<\frac{g}{n-1}- \frac{g}{n}$, following (\ref{eq:hg1}) and (\ref{eq:hg2}). Thus we get $|j\cdot \frac{g}{n-1} - i \cdot \frac{g}{n}|<\frac{g}{n-1}- \frac{g}{n}$. It follows that $|j \cdot n - i\cdot n + i| < 1$. Since $j, i, n$ are integers, we get that $j \cdot n - i\cdot n + i = 0$, and thus $j = i \cdot \frac{n-1}{n}$. Now since $i\leq n$ and $j\leq n-1$ are both natural numbers, the only solution is $j=n-1$ and $i=n$, which means that the flows from type B are all forwarded through the same path, as well as the flows of type C, contradicting the assumption.
	\item If the two type A flows are forwarded through the same edge, their total bandwidth is $2h$, and thus the remaining bandwidth through this edge is $1-2h$. From (\ref{eq:hg2}) we have $\frac{g}{n-1}-\frac{g}{n}  > 1-2h$. We note that (i) $\frac{g}{n-1} > \frac{g}{n-1}-\frac{g}{n}$, and (ii)  $\frac{g}{n} > \frac{g}{n-1}-\frac{g}{n}$. It follows that $\frac{g}{n-1}>1-2h$, and also $\frac{g}{n}>1-2h$, and thus none of the type B or type C flows fit on the same path with $F_1$ and $F_2$. Thus, all the type B and type C flows are on the same path, contradicting the assumption. 
\end{itemize}

We have shown that all flows of type B, denoted by $\mathbb{F}^B$, must be forwarded through the same path, and that all flows of type C, denoted by $\mathbb{F}^C$, are forwarded through the same path. Thus, after the source generates the $2\cdot n +1$ flows, there are two possible scenarios:
\begin{itemize}
	\item The two type A flows are forwarded through the same path, and the type B and type C flows are forwarded through the other path. In this case $s$ generates two flows at a rate of $1-h-g$ each. To accommodate both flows the controller must swap the flows of $\mathbb{F}^B$ with $F_1$ or the flows of $\mathbb{F}^C$ with $F_2$. Both possible swaps involve $n$ entries, and thus the controller is force to perform an $n$-swap.
	\item One path is used for $F_1$ and the flows of $\mathbb{F}^C$, and the other path is used for $F_2$ and the flows of $\mathbb{F}^B$. In this case the source generates a flow with a bandwidth of $1-2h$, again forcing the controller to swap the flows of $\mathbb{F}^B$ with $F_1$ or the flows of $\mathbb{F}^C$ with $F_2$.
\end{itemize}
In both cases the controller is forced to perform a swap that involves the $n$ nodes, i.e., an $n$-swap.
\end{proof}
\else
The proof is presented in~\cite{TimeConfTR}.
\fi

\section{Design and Implementation}
\label{DesImpSec}
\subsection{Protocol Design}
\ifdefined\ShortVersion
\else
\vspace{1mm}
\emph{1) Overview}
\vspace{1mm}

A \timec-enabled system is comprised of two main components:

\begin{itemize}
	\item \textbf{OpenFlow time extension.} \timec\ is built upon the OpenFlow protocol. We define an extension to the OpenFlow protocol that enables timed updates; the controller can attach an \emph{execution time} to every OpenFlow command it sends to a switch, defining when the switch should perform the required command.  
\ifdefined\BlindRev
\else
It should be noted that the \timec\ approach is not limited to OpenFlow; we have defined a similar time extension to the NETCONF protocol~\cite{TimeNetconf}, but in this paper we focus on \timec\ in the context of OpenFlow, as described in the next subsection.
\fi
	\item \textbf{Clock synchronization.} \timec\ requires the switches and controller to maintain a local clock, allowing time-triggered events. Hence, the local clocks should be synchronized. The OpenFlow time extension we defined does not mandate a specific synchronization method. Various mechanisms may be used, e.g., the Network Time Protocol (NTP), the Precision Time Protocol (PTP)~\cite{IEEE1588}, or GPS-based synchronization. The prototype we designed and implemented uses \rptp~\cite{ispcsrptp}, as described below.
\end{itemize}

\vspace{1mm}
\emph{2) OpenFlow Time Extension}
\vspace{1mm}

\fi 

We present an extension that allows OpenFlow controllers to signal the time of execution of a command to the switches. This extension is described in full in 
\ifdefined\TechReport
\ifdefined\JournalVer
\cite{TimeConfTR}.%
\else
Appendix~\ref{ExtAppendix}.%
\fi
\else
\cite{TimeConfTR}.%
\fi
\ifdefined\BlindRev
\else
\footnote{A preliminary version of this extension was presented in~\cite{TimeTR}.}
\fi

Our extension makes use of the OpenFlow~\cite{OpenFlow1.4} \textbf{Bundle} feature; a Bundle is a sequence of OpenFlow messages from the controller that is applied as a single operation. Our time extension defines \textbf{\emph{Scheduled Bundles}}, allowing all commands of a Bundle to come into effect at a pre-determined time. This is a generic means to extend all OpenFlow commands with the scheduling feature.

Using Bundle messages for implementing \timec\ has two significant advantages: (i)~It is a generic method to add the time extension to all OpenFlow commands without changing the format of all OpenFlow messages; only the format of Bundle messages is modified relative to the Bundle message format in~\cite{OpenFlow1.4}, optionally incorporating an execution time. (ii)~The Scheduled Bundle allows a relatively straightforward way to \emph{cancel} scheduled commands, as described below.

Fig.~\ref{fig:Bundle} illustrates the \emph{Scheduled Bundle} message procedure. In step 1, the controller sends a \emph{Bundle Open} message to the switch, followed by one or more Add messages (step 2). Every \emph{Add} message encapsulates an OpenFlow message, e.g., a \emph{FLOW\_MOD} message. A \emph{Bundle Close} is sent in step 3, followed by the \emph{Bundle Commit} (step 4), which optionally includes the scheduled time of execution, $T_s$. The switch then executes the desired command(s) at time $T_s$.

The \emph{Bundle Discard} message (step 5$'$) allows the controller to enforce an all-or-none scheduled update; after the \emph{Bundle Commit} is sent, if one of the switches sends an \emph{error} message, indicating that it is unable to schedule the current Bundle, the controller can send a Discard message to all switches, canceling the scheduled operation. Hence, when a switch receives a scheduled commit, to be executed at time $T_s$, the switch can verify that it can dedicate the required resources to execute the command as close as possible to $T_s$. If the switch's resources are not available, for example due to another command that is scheduled to $T_s$, then the switch replies with an error message, aborting the scheduled commit. Significantly, this mechanism allows switches to execute the command with a guaranteed scheduling accuracy, avoiding the high variation that occurs when untimed updates are used.

The OpenFlow time extension also defines \emph{Bundle Feature Request} messages, which allow the controller to query switches about whether they support Scheduled Bundles, and to configure some of the switch parameters related to \emph{Scheduled Bundles}.

\ifdefined\ShortVersion
\else
\vspace{1mm}
\emph{3) Clock Synchronization: \rptp}
\vspace{1mm}

In the last decade PTP, based on the IEEE 1588~\cite{IEEE1588} standard, has become a common feature in commodity switches, typically providing a clock accuracy on the order of 1~microsecond. 

\ifdefined\BlindRev
\rptp~\cite{ispcsrptp} is a PTP variant designed for SDNs. 
\else
In~\cite{ispcsrptp,hotsdnrptp} we introduced \rptp\, a PTP variant for SDNs. 
\rptp\ is based on PTP, but is conceptually reversed.  
\fi
In PTP a single node periodically distributes its time to the other nodes in the network.
In \rptp\ all nodes in the network (the switches) periodically distribute their time to a single node (the controller). 
The controller keeps track of the offsets, denoted by $\oi$ for switch $i$, between its clock and each of the switches' clocks, and uses them to send each switch individualized timed commands. 

\rptp\ allows the complex clock algorithms to be implemented by the controller, whereas the `dumb' switches only need to distribute their time to the controller. Following the SDN paradigm, the \rptp\ algorithmic logic can be programmed and dynamically tuned at the controller without affecting the switches. 

Another advantage of \rptp, which played an important role in our experiments, is that \rptp\ allows the controller to keep track of the synchronization status of each clock; a clock synchronization protocol requires a long setup time, typically tens of minutes. \rptp\ provides an indication of when the setup process has completed.

\ifdefined\BlindRev
\else
\begin{figure}[htbp]
  \centering
  \fbox{\includegraphics[width=.25\textwidth]{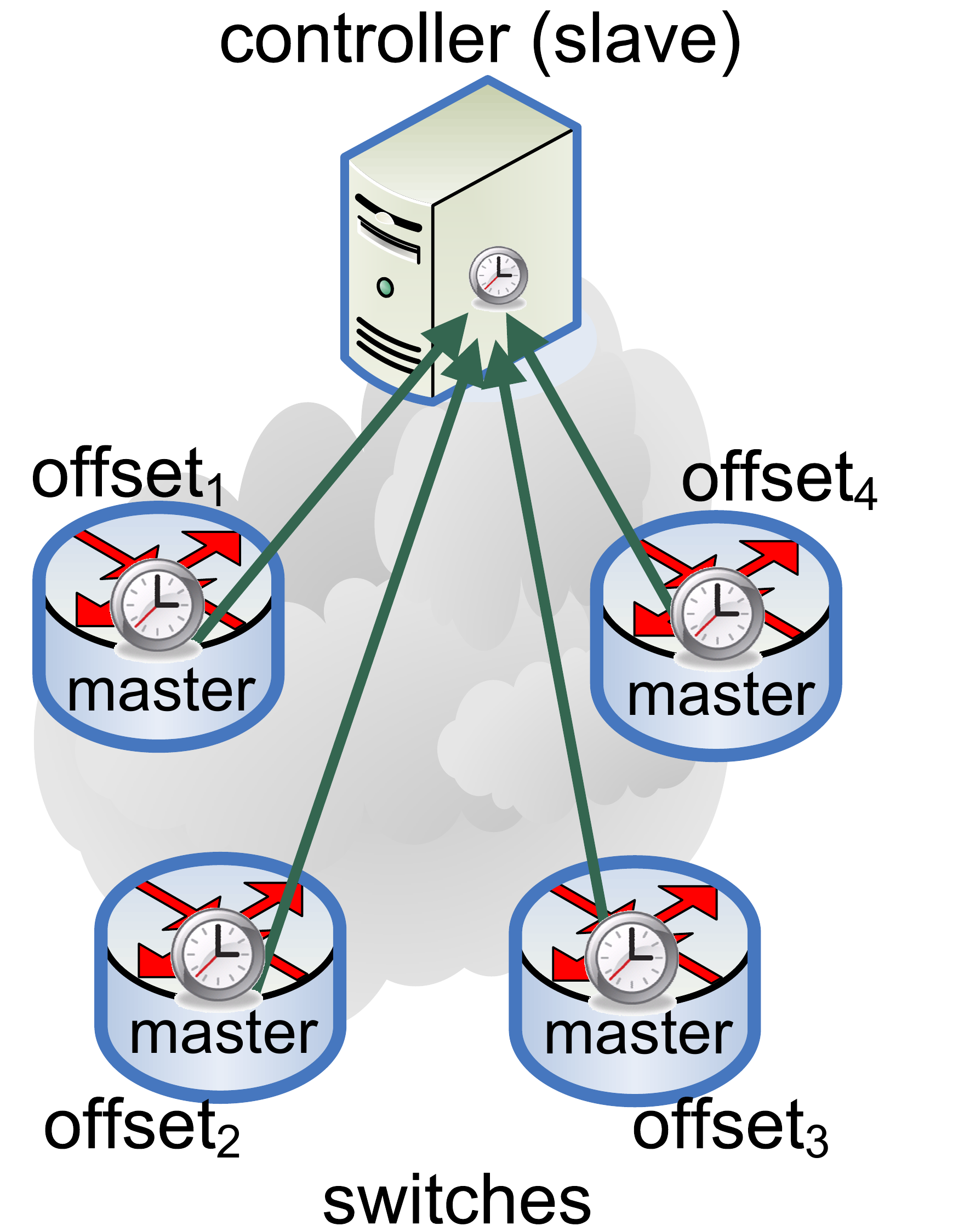}}
	\captionsetup{justification=raggedright}
  \caption{\rptp\ in SDN: switches distribute their time to the controller. Switches' clocks are \emph{not} synchronized. For every switch $i$, the controller knows $\oi$ between switch $i$'s clock and its local clock.}
  \label{fig:rptp}
\end{figure}

As shown in~\cite{ispcsrptp}, \rptp\ can be effectively used to perform timed updates; in order to have switch $i$ perform a command at time $T_s$, the controller instructs $i$ to perform the command at time $T^i_s$, where $T^i_s=T_s+\oi$ takes the offset between the controller and switch $i$ into account,\footnote{$T^i_s$, as described above is a first order approximation of the desired execution time. The controller can compute a more accurate execution time by also considering the clock skew and drift, as discussed in~\cite{ispcsrptp}.} 
causing $i$ to perform the action at time $T_s$ according to the controller's clock.
\fi 
\fi 

\ifdefined\ShortVersion
\subsection{Clock synchronization} 
\timec\ requires the switches and controller to maintain a local clock, enabling time-triggered events. Hence, the local clocks should be synchronized. The OpenFlow time extension we defined does not mandate a specific synchronization method. Various mechanisms may be used, e.g., the Network Time Protocol (NTP), the Precision Time Protocol (PTP)~\cite{IEEE1588}, or GPS-based synchronization. The prototype we designed and implemented uses \rptp~\cite{ispcsrptp}, a variant of PTP that is customized for SDN.
\fi

\subsection{Prototype Design and Implementation}
We have designed and implemented a software-based prototype of \timec, as illustrated in Fig.~\ref{fig:Arch}. The components we implemented are marked in black. 
\ifdefined\OpenSourceSoon
These components run on Linux, and will soon be publicly available as open source.  
\else
These components run on Linux, and are publicly available as open source~\cite{TimedSDNSource}.  
\fi

\begin{figure}[!t]
  \centering
  \fbox{\includegraphics[height=12\grafflecm]{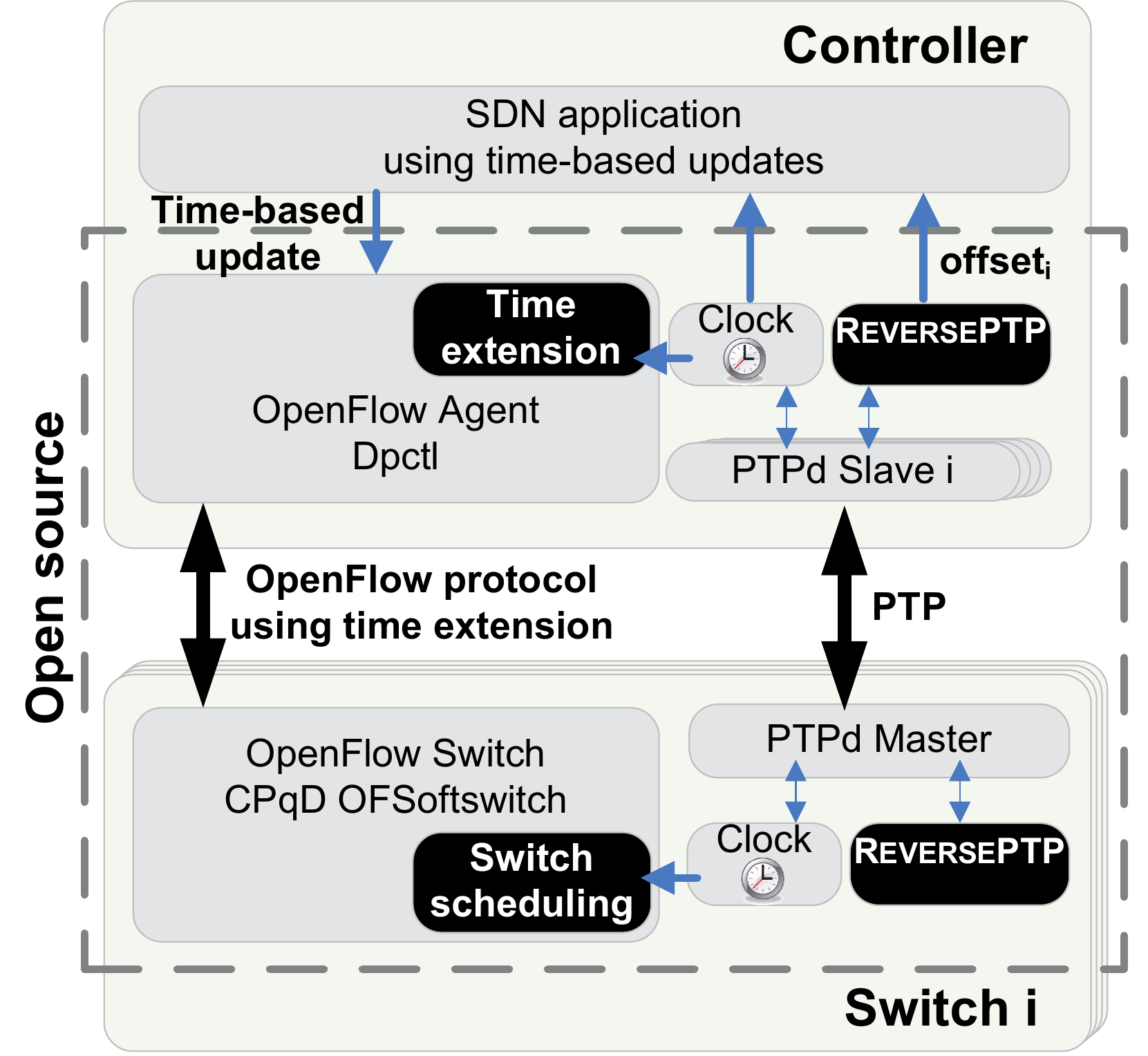}}
	\captionsetup{justification=raggedright}
  \caption{\timec\ prototype design: the black blocks are the components implemented in the context of this work.}
  \label{fig:Arch}
	\ifdefined\cutspace \vspace{-4mm} \fi
\end{figure}

Our \timec-enabled OFSoftswitch prototype was adopted by the ONF as the official prototype of Scheduled Bundles.\footnote{The ONF process for adding new features to OpenFlow requires every new feature to be prototyped.}

\textbf{Switches.} Every switch $i$ runs an OpenFlow switch software module. Our prototype is based on the open source CPqD OFSoftswitch~\cite{CPqDOF},\footnote{OFSoftswitch is one of the two software switches used by the Open Networking Foundation (ONF) for prototyping new OpenFlow features. We chose this switch since it was the first open source OpenFlow switch to include the Bundle feature.} incorporating the \emph{switch scheduling} module (see Fig.~\ref{fig:Arch}) that we implemented. When the switch receives a \emph{Scheduled Bundle} from the controller, the \emph{switch scheduling} module 
schedules the respective OpenFlow command to the desired time of execution. The switch scheduling module also handles \emph{Bundle Feature Request} messages received from the controller.

Each switch runs a \rptp\ master, which distributes the switch's time to the controller. Our \rptp\ prototype is a lightweight set of Bash scripts that is used as an abstraction layer over the well-known open source PTPd~\cite{ptpd} module. Our software-based implementation uses the Linux clock as the reference for PTPd, and for the switch's scheduling module. 
To the best of our knowledge, ours is the first open source implementation of \rptp.

\textbf{Controller.} The controller runs an OpenFlow agent, which communicates with the switches using the OpenFlow protocol. Our prototype uses the CPqD Dpctl (Datapath Controller), which is a simple command line tool for sending OpenFlow messages to switches. We have extended Dpctl by adding the time extension; the Dpctl command-line interface allows the user to define the execution time of a \emph{Bundle Commit}. Dpctl also allows a user to send a \emph{Bundle Feature Request} to switches.

The controller runs \rptp\ with $n$ instances of PTPd in slave mode, where $n$ is the number of switches in the network. One or more SDN applications can run on the controller and perform timed updates. The application can extract the offset, $\oi$, of every switch~$i$ from \rptp, and use it to compute the scheduled execution time of switch $i$ in every timed update. The Linux clock is used as a reference for PTPd, and for the SDN application(s).

\section{Evaluation}
\label{EvaluationSec}
\subsection{Evaluation Method}
\label{EvalMethodSec}
\textbf{Environment.}
We evaluated our prototype on a 71-node testbed in the DeterLab~\cite{DeterLabProj} environment. Each machine (PC) in the testbed either played the role of an OpenFlow switch, running our \timec-enabled prototype, or the role of a host, sending and receiving traffic. A separate machine was used as a controller, which was connected to the switches using an out-of-band network. 

\ifdefined\TechReport
We remark that we did not use Mininet~\cite{lantz2010network} in our evaluation, as Mininet is an emulation environment that runs on a single machine, making it impractical for emulating simultaneous or time-triggered events. We did, however, run our prototype over Mininet in some of our preliminary testing and verification.
\fi

\textbf{Performance attributes.}
Three performance attributes play a key role in our evaluation, as shown in Table~\ref{AttributeTable}.

\begin{table}[htbp]
		\centering
    \begin{tabular}{| l | p{7.2cm}|}
    \hline
    $\dmi$ & The average time elapsed between two consecutive messages sent by the controller. \\ 
	  $I_R$ & Installation latency range: the difference between the maximal rule installation latency and the minimal installation latency. \\ 
	  $\delta$ & Scheduling error: the maximal difference between the actual update time and the scheduled update time. \\ 
	  \hline
    \end{tabular}
    \caption{Performance Attributes.}
    \label{AttributeTable}
\end{table}

Intuitively, $\dmi$ and $I_R$ determine the performance of untimed updates. $\dmi$ indicates the \textbf{controller}'s performance; an OpenFlow controller can handle as many as tens of thousands~\cite{tavakoli2009applying} to millions~\cite{tootoonchian2012controller} of packets per second, depending on the type of controller and the machine's processing power. Hence, $\dmi$ can vary from 1 microsecond to several milliseconds. $I_R$ indicates the installation latency variation. The installation latency is the time elapsed from the instant the controller sends a rule modification message until the rule has been installed. The installation latency of an OpenFlow rule modification (FLOW\_MOD) has been shown to range from 1 millisecond to seconds~\cite{rotsos2012oflops,jin2014dynamic}, and grows dramatically with the number of installations per second. 

The attribute that affects the performance of \textbf{timed} updates is the switches' scheduling error, $\delta$. When an update is scheduled to be performed at time $T_0$, it is performed in practice at some time $t \in [T_0, T_0+\delta]$.\footnote{An alternative representation of the accuracy, $\delta$, assumes a symmetric error, $T_0 \pm \delta$. The two approaches are equivalent.}  The \schea, $\delta$, is affected by two factors: the device's \emph{clock accuracy}, which is the maximal offset between the clock value and the value of an accurate time reference, and the \emph{\exea}, which is a measure of how accurately the device can perform a timed update, given run-time parameters such as the concurrently executing tasks and the load on the device. The achievable clock accuracy strongly depends on the network size and topology, and on the clock synchronization method. For example, the clock accuracy using the Precision Time Protocol~\cite{IEEE1588} is typically on the order of 1~microsecond (e.g.,~\cite{ChinaMobile}).

\ifdefined\TechReport
\textbf{Software-based evaluation.}
Our experiments measure the three performance attributes in a setting that uses \textbf{software switches}.
While the \emph{values} we measured do not necessarily reflect on the performance of systems that use hardware-based switches, the merit of our evaluation is that we \textbf{vary} these parameters and analyze how they affect the network update performance with untimed approaches and with \timec.
\fi

\ifdefined\TechReport
\begin{figure*}[htbp]

	\centering
  \begin{subfigure}[t]{.33\textwidth}
  \centering
  \fbox{\includegraphics[height=6\grafflecm]{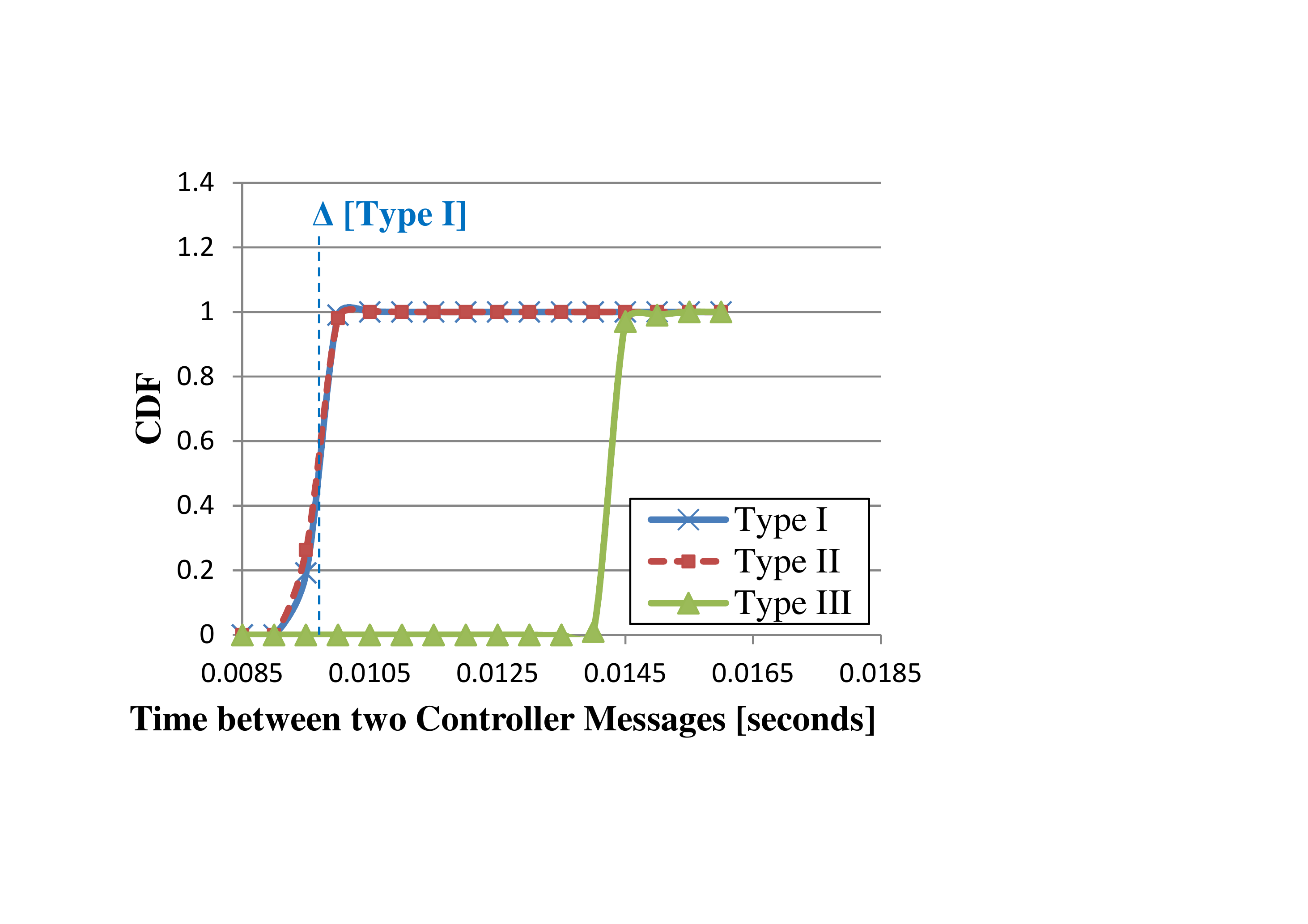}}
	\captionsetup{justification=raggedright}
  \caption{The empirical Cumulative Distribution Function (CDF) of the time elapsed between two consecutive controller messages. $\dmi$ is the average value, which is shown in the figure for Type I.}
  \label{fig:DeltaCDF}
  \end{subfigure}%
  \begin{subfigure}[t]{.33\textwidth}
  \centering
  \fbox{\includegraphics[height=6\grafflecm]{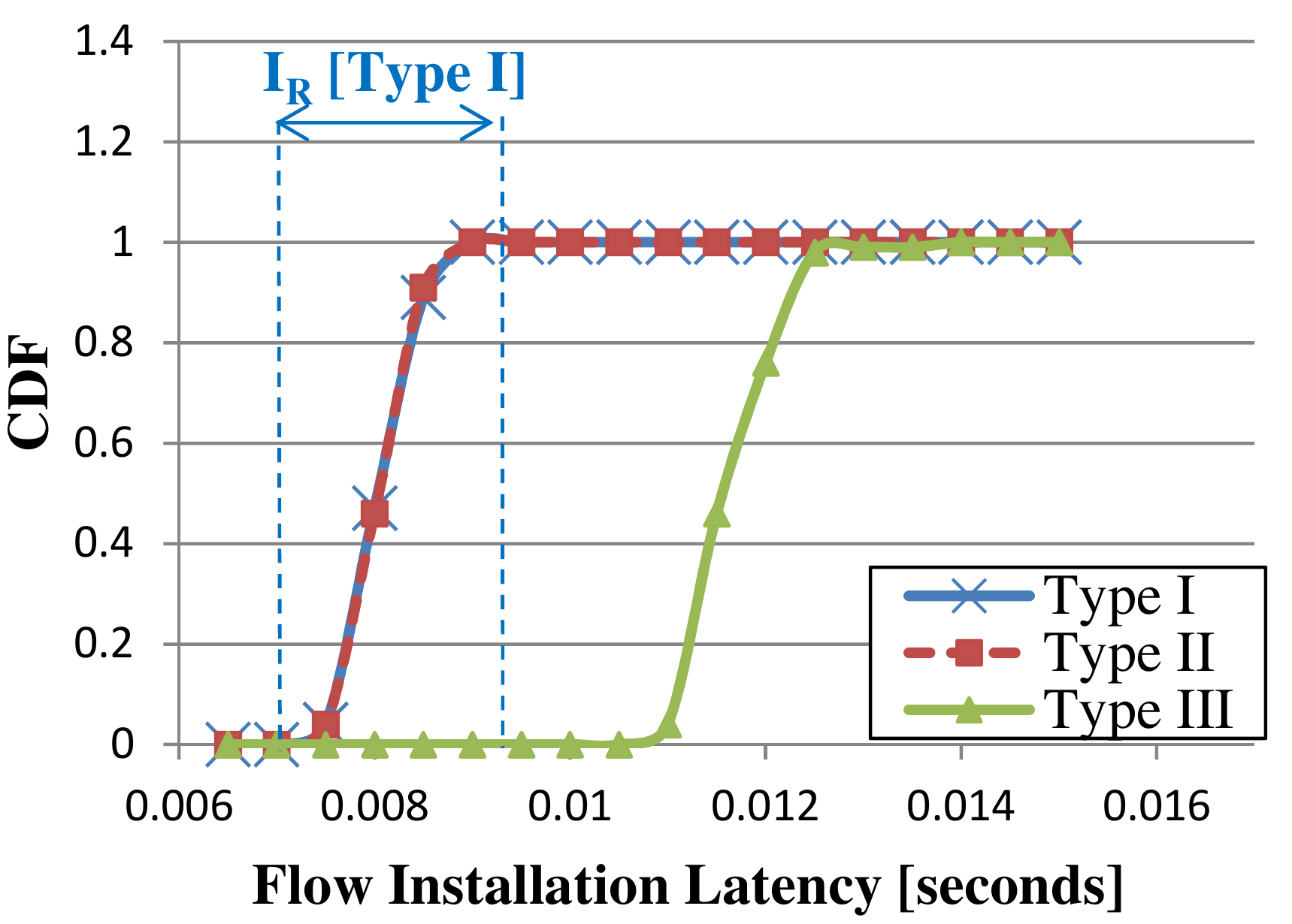}}
	\captionsetup{justification=centering}
  \caption{The empirical CDF of the flow installation latency. $I_R$ is the difference between the max and min values, as shown in the figure for Type I.}
  \label{fig:IvarCDF}
  \end{subfigure}%
  \begin{subfigure}[t]{.33\textwidth}
  \centering
  \fbox{\includegraphics[height=6\grafflecm]{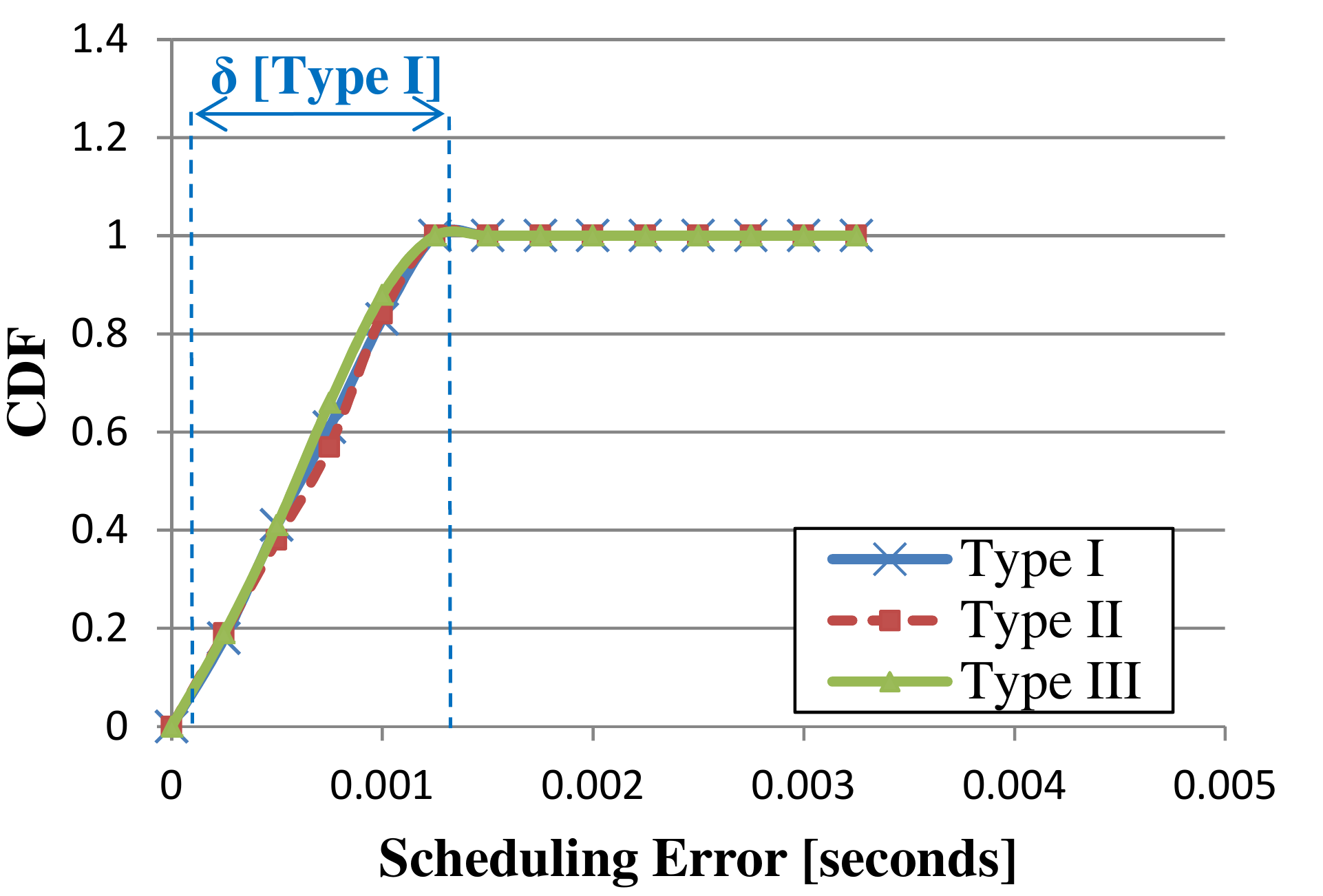}}
	\captionsetup{justification=raggedright}
  \caption{The empirical CDF of the scheduling error, i.e., the difference between the actual execution time and the scheduled execution time. $\delta$ is the maximal error value, as shown in the figure for Type I.}
  \label{fig:sm_deltaCDF}
  \end{subfigure}%

	\ifdefined\cutspace \vspace{-3mm} \fi

  \caption{Measurement of the three performance attributes: (a) $\dmi$, (b) $I_R$, and (c) $\delta$.}
  \label{fig:CDF}

	\ifdefined\cutspace \vspace{-3mm} \fi

\end{figure*}
\fi

\ifdefined\TechReport
\else
\begin{figure}[!b]
  \centering
  \fbox{\includegraphics[width=.35\textwidth]{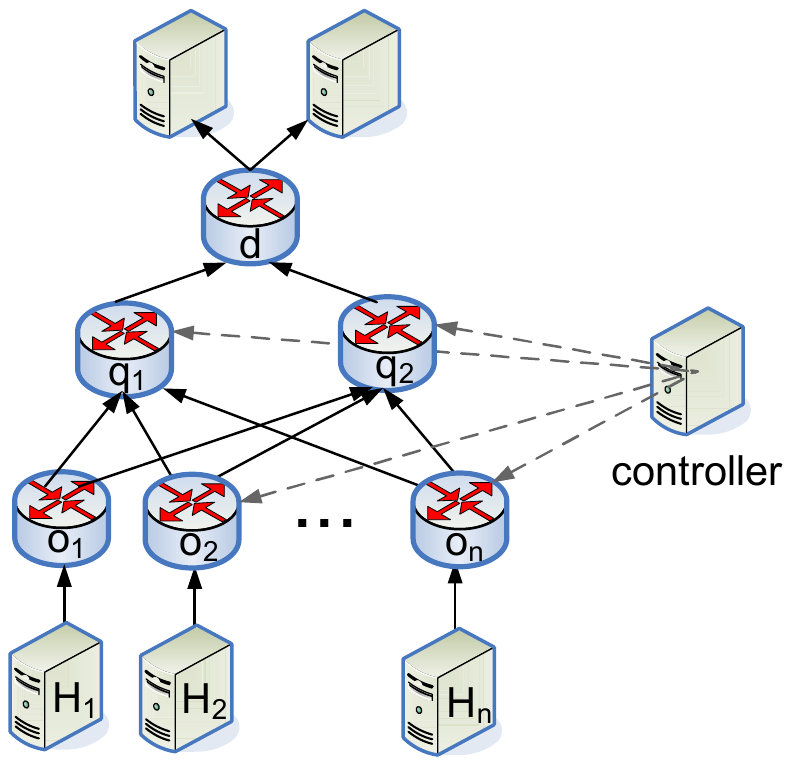}}
	\captionsetup{justification=centering}
  \caption{Experimental evaluation: every host and switch was emulated by a Linux machine in the DeterLab testbed. All links have a capacity of 10 Mbps. The controller is connected to the switches by an out-of-band network.}
  \label{fig:Experiment}
	\ifdefined\cutspace \vspace{-3mm} \fi
\end{figure}
\fi

\subsection{Performance Attribute Measurement} 
\begin{sloppypar}
Our experiments measured the three attributes, $\dmi$, $I_R$, and $\delta$, illustrating how accurately updates can be applied in software-based OpenFlow implementations. It should be noted that these three values depend on the processing power of the testbed machine; we measured the parameters for three types of DeterLab machines, Type I, II, and III, listed in Table~\ref{ParameterTable}. 
\ifdefined\TechReport
Each attribute was measured 100 times on each machine type, and Fig.~\ref{fig:CDF} illustrates our results. The figure graphically depicts the values $\dmi$, $I_R$, and $\delta$ of machine Type I as an example.
\fi
\end{sloppypar}

\begin{table}[!t]
		\centering
    \begin{tabular}{|l|l||c|c|c|}
    \hline
    \multicolumn{2}{|l||}{Machine Type} & $\dmi$ & $I_R$ & $\delta$ \\ \hline \hline
    \multirow{2}{*}{I} & Intel Xeon E3 LP & \multirow{2}{*}{9.64} & \multirow{2}{*}{1.3} & \multirow{2}{*}{1.23} \\ 
     & 2.4 GHz, 16 GB RAM &  & &   \\ \hline 
    \multirow{2}{*}{II} & Intel Xeon &  \multirow{2}{*}{9.6} & \multirow{2}{*}{1.47} & \multirow{2}{*}{1.18}  \\ 
     & 2.1 GHz, 4 GB RAM &  & &   \\ \hline 
    \multirow{2}{*}{II} & Intel Dual Xeon &  \multirow{2}{*}{14.27} & \multirow{2}{*}{2.72} & \multirow{2}{*}{1.19}  \\ 
     & 3 GHz, 2 GB RAM &  & &   \\ \hline 
    \end{tabular}
    \caption{Measured attributes in milliseconds.}
    \label{ParameterTable}
		\ifdefined\cutspace \vspace{-4mm} \fi
\end{table}

\ifdefined\TechReport
The measured scheduling error, $\delta$, was slightly more than 1~millisecond in all the machines we tested. Our experiments showed that the \emph{clock accuracy} using \rptp\ over the DeterLab testbed is on the order of $100$ microseconds. The measured value of~$\delta$ in Table~\ref{ParameterTable} shows the \emph{execution accuracy}, which is an order of magnitude higher. The installation latency range, $I_R$, was slightly higher than $\delta$, around $1$~to~$3$ milliseconds. The measured value of $\dmi$ was high, on the order of 10~milliseconds, as Dpctl is not optimized for performance.
\fi

In software-based switches, the CPU handles both the data-plane traffic and the communication with the controller, and thus $I_R$ and $\delta$ can be affected by the rate of data-plane traffic through the switch. Hence, in our experiments we fixed the rate of traffic through each switch to 10 Mbps, allowing an `apples-to-apples' comparison between experiments. 

\ifdefined\ShortVersion
\else
\subsection{Microbenchmark: Video Swapping}
\label{MicrobSec}
To demonstrate how \timec\ is used in a real-life scenario, we reconstructed the video swapping topology of~\cite{edwards2014using}, as illustrated in Fig.~\ref{fig:FoxTopo}. Two video cameras, A and B, transmit an uncompressed video stream to targets A and B, respectively. At a given point in time, the two video streams are swapped, so that the stream from source A is transmitted to target B, and the stream from B is sent to target A. As described in~\cite{edwards2014using}, the swap must be performed at a specific time instant, in which the video sources transmit data that is not visible to the viewer, making the swap unnoticeable. 

\begin{figure}[htbp]

	\centering
  \begin{subfigure}[t]{.26\textwidth}
  \centering
  \fbox{\includegraphics[height=5.7\grafflecm]{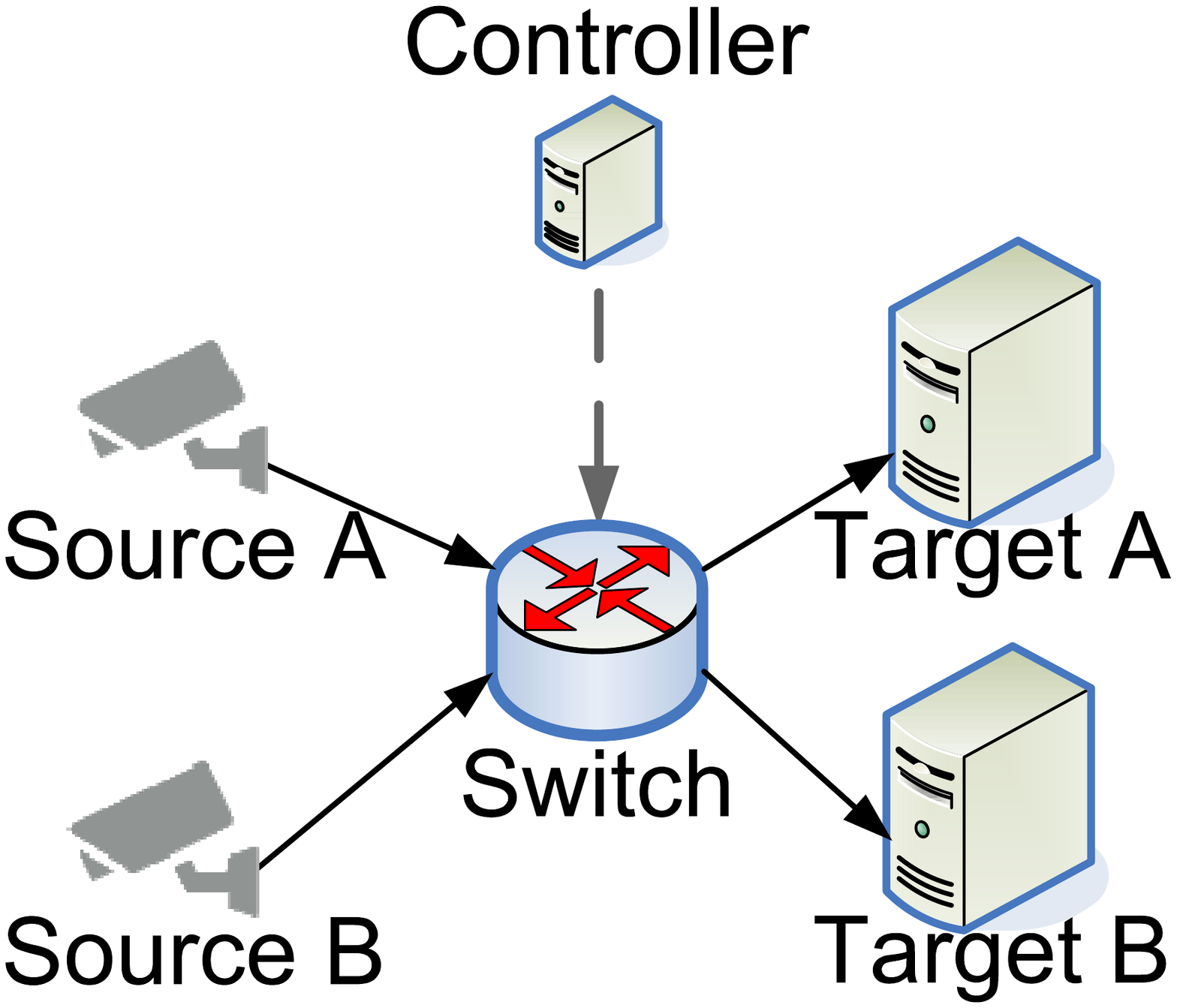}}
	\captionsetup{justification=centering}
  \caption{Topology.}
  \label{fig:FoxTopo}
  \end{subfigure}%
  \begin{subfigure}[t]{.22\textwidth}
	\centering
  \fbox{\includegraphics[height=5.7\grafflecm]{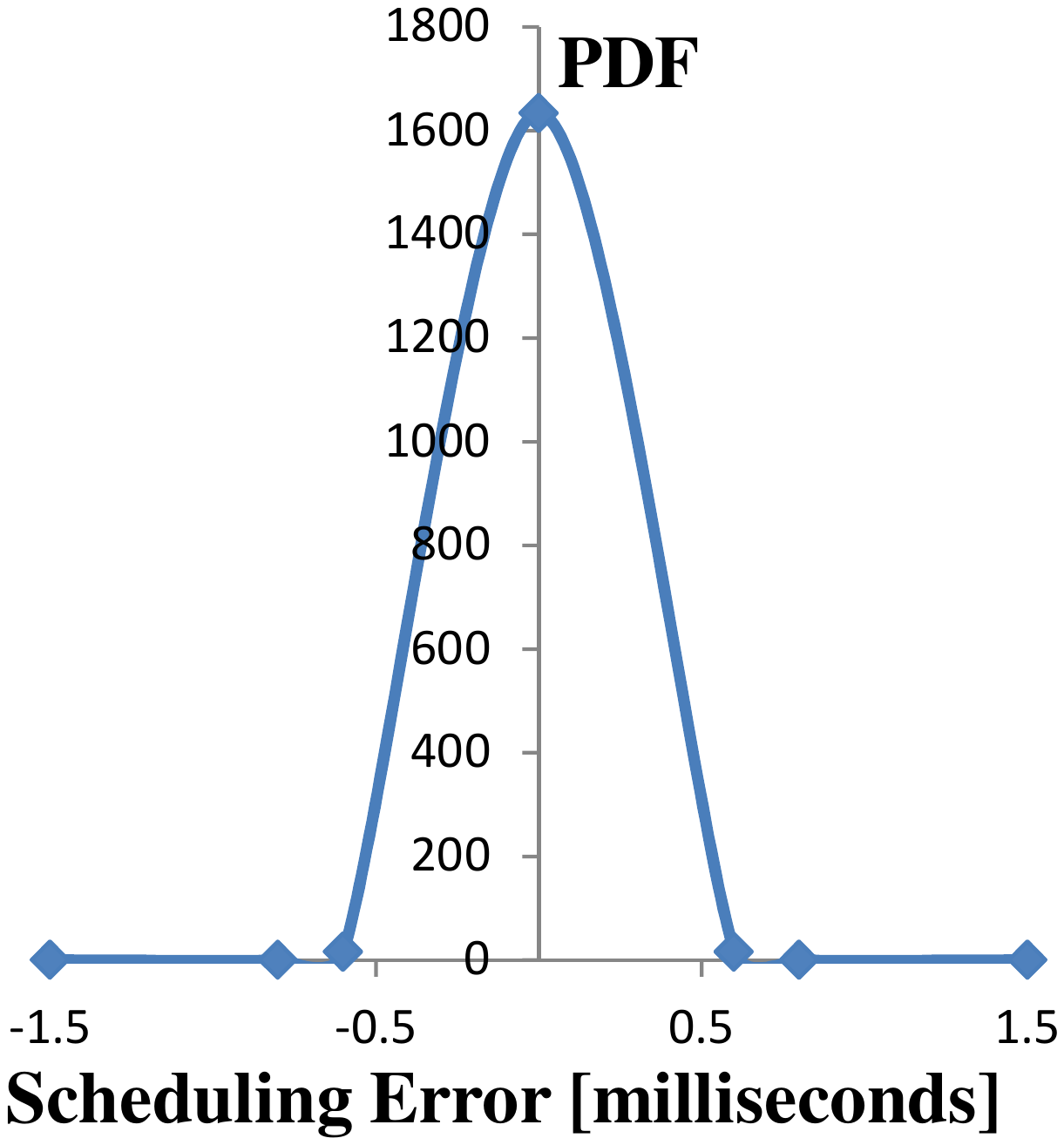}}
	\captionsetup{justification=centering}
  \caption{Video swapping accuracy.}
  \label{fig:FoxPDF}
  \end{subfigure}%

  \caption{Microbenchmark: video swapping.}
  \label{fig:Fox}

\end{figure}

The authors of~\cite{edwards2014using} noted that the precisely-timed swap cannot be performed by an OpenFlow switch, as currently OpenFlow does not provide abstractions for performing accurately timed changes. Instead, it uses \emph{source timing}, where sources A and B are time-synchronized, and determine the swap time by using a swap indication in the packet header. The OpenFlow switch acts upon the swap indication to determine the correct path for each stream. We note that the main drawback of this source-timed approach is that the SMPTE 2022-6 video streaming standard~\cite{SMPTE}, which was used in~\cite{edwards2014using}, does not currently define an indication about where in the video stream a packet comes from, and specifically does not include an indication about the correct swapping time. Hence, off-the-shelf streaming equipment does not provide this indication. In~\cite{edwards2014using}, the authors used a dedicated Linux server to integrate the non-standard swap indication.

In this experiment we studied how \timec\ can tackle the video swapping scenario, avoiding the above drawback. Each node in the topology of Fig.~\ref{fig:FoxTopo} was emulated by a DeterLab machine. We used two 10 Mbps flows, generated by Iperf~\cite{Iperf}, to simulate the video streams. Each swap was initiated by the controller 100 milliseconds in advance (as in~\cite{edwards2014using}): the controller sent a Scheduled Bundle, incorporating two updates, one for each of the flows. We repeated the experiment 100 times, and measured the scheduling error. 

The measurement was performed by analyzing capture files taken at the sources and at the switch's egress ports. A swap that was scheduled to be performed at time $T$, was considered accurate if every packet that was transmitted by each of the source before time $T$ was forwarded according to the old configuration, and every packet that was transmitted after $T$ was forwarded according to the new configuration. The scheduling error of each swap (measured in milliseconds) was computed as the number of misrouted packets, divided by the bandwidth of the traffic flow. The sign of the scheduling error indicates whether the swap was performed before the scheduled time (negative error) or after it (positive error).

Fig.~\ref{fig:FoxPDF} illustrates the empirical Probability Density Function (PDF) of the scheduling error of the swap, i.e., the difference between the actual swapping time and the scheduled swapping time. As shown in the figure, the swap is performed within $\pm 0.6$ milliseconds of the scheduled swap time. We note that this is the achievable accuracy in a software-based OpenFlow switch, 
and that a much higher degree of accuracy, on the order of microseconds, can be achieved if two conditions are met: (i) A hardware switch is used, supporting timed updates with a microsecond accuracy, as shown in~\cite{Infocom-TimeFlip}, and (ii) The cameras are connected to the switch over a single hop, allowing low latency variation, on the order of microseconds.

\fi 

\begin{figure*}[!t]

	\centering
  \begin{subfigure}[t]{.22\textwidth}
  \centering
  \fbox{\includegraphics[height=5\grafflecm]{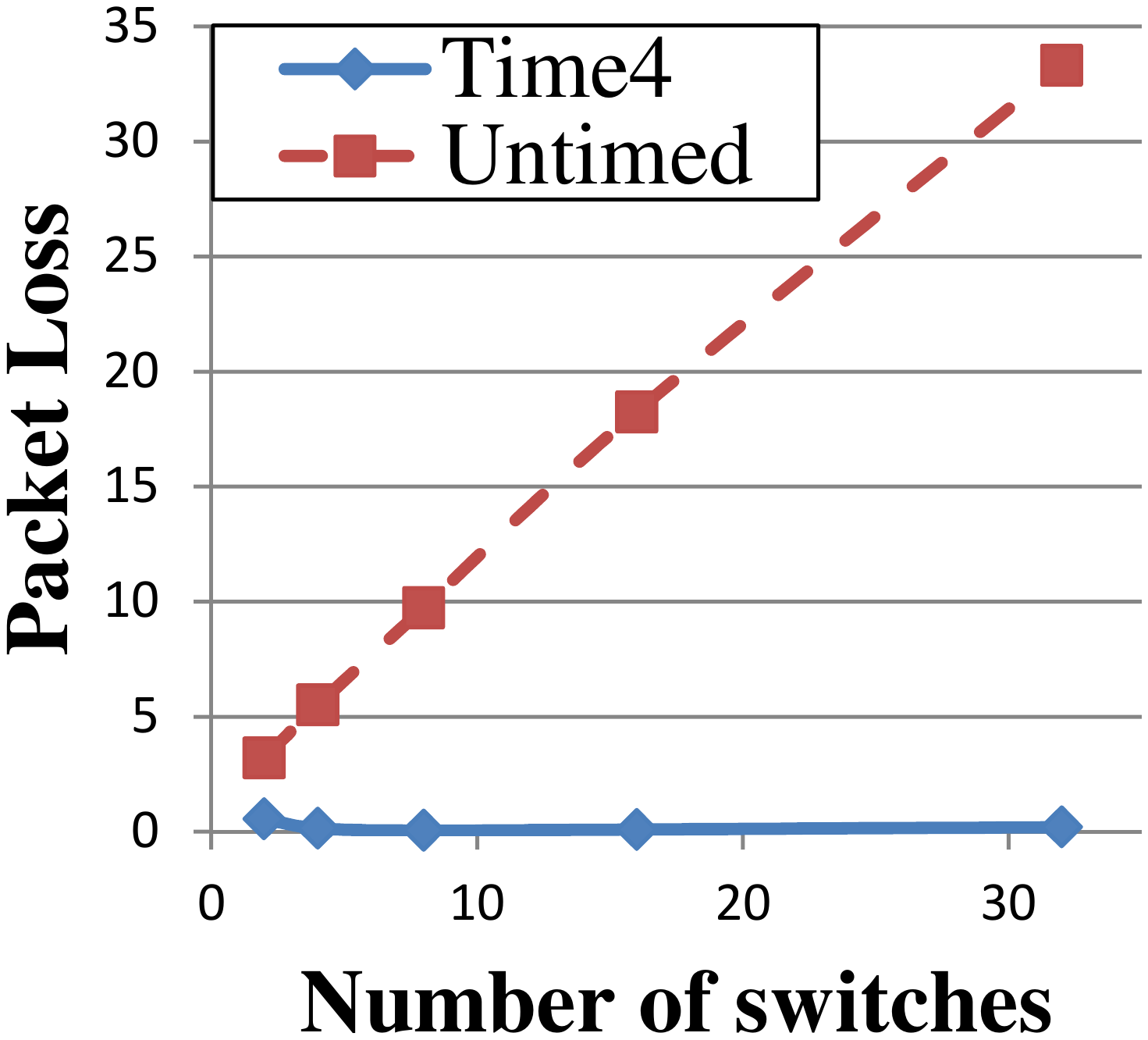}}
  \ifdefined\cutspace \vspace{-2mm} \fi
	\captionsetup{justification=centering}
  \caption{The no. of packets lost in~a flow swap vs. no. of switches involved in the update.}
  \label{fig:LossVsN}
  \end{subfigure}%
  \begin{subfigure}[t]{.3025\textwidth}
  \centering
  \fbox{\includegraphics[height=5\grafflecm]{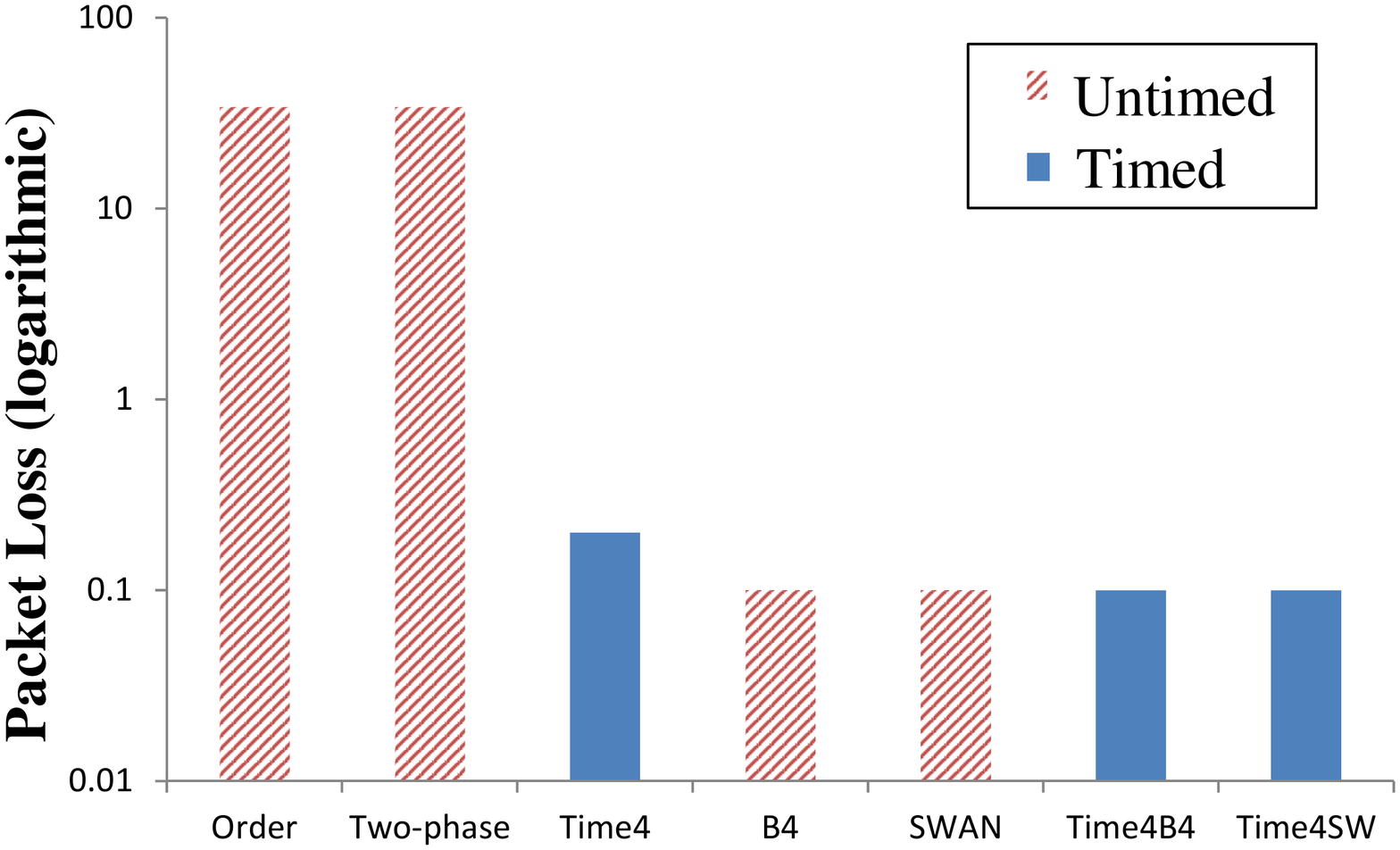}}
  \ifdefined\cutspace \vspace{-5mm} \fi
	\captionsetup{justification=centering}
  \caption{The number of packets lost in~a flow swap in different update approaches (with~$n=32$).}
  \label{fig:UpdateCompare}
  \end{subfigure}%
  \begin{subfigure}[t]{.24\textwidth}
  \centering
  \fbox{\includegraphics[height=5\grafflecm]{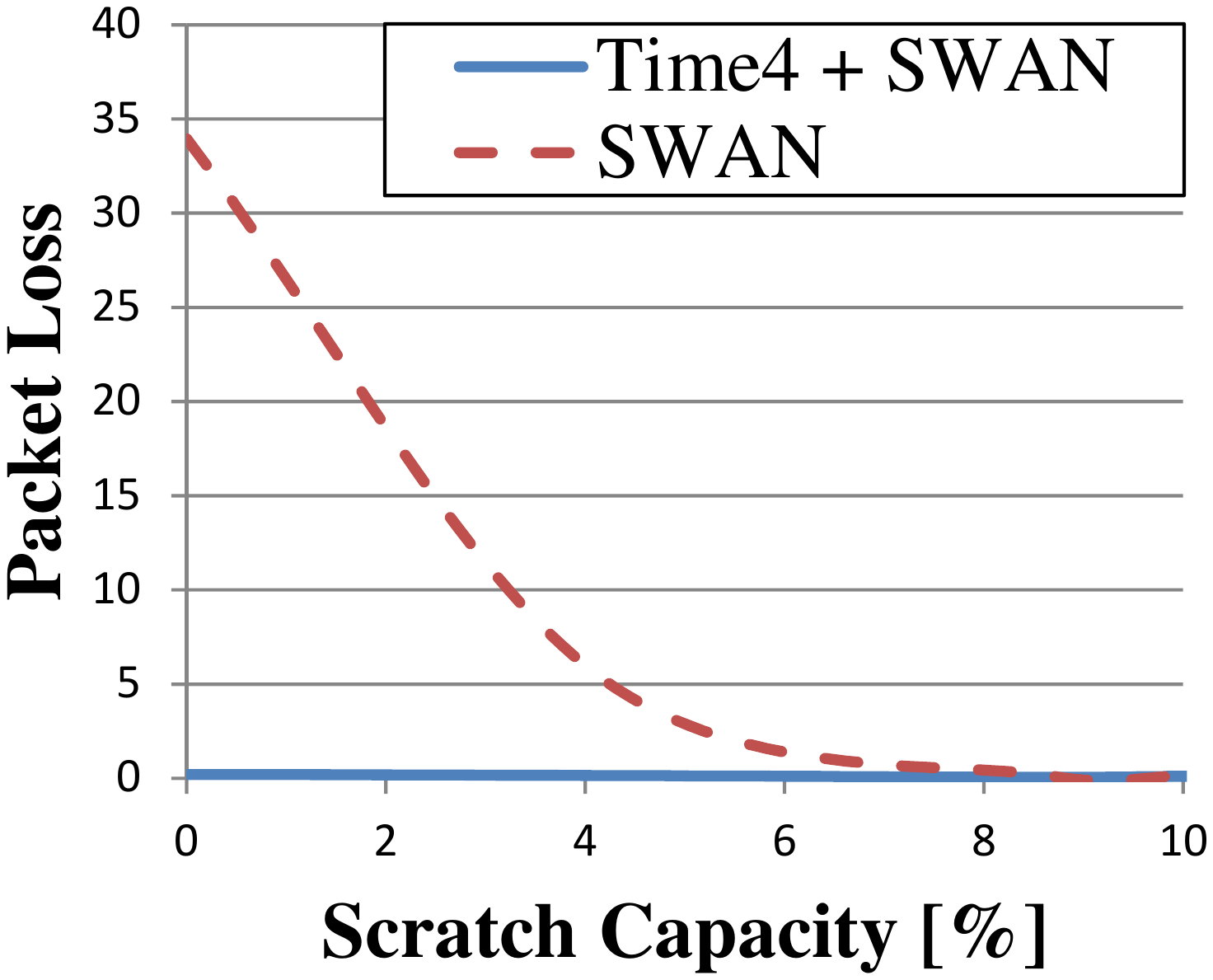}}
  \ifdefined\cutspace \vspace{-5mm} \fi
	\captionsetup{justification=centering}
  \caption{The number of packets lost in~a flow swap using SWAN and \timec\ $+$ SWAN (with~$n=32$).}
  \label{fig:SWAN}
  \end{subfigure}%
  \begin{subfigure}[t]{.24\textwidth}
  \centering
  \fbox{\includegraphics[height=5\grafflecm]{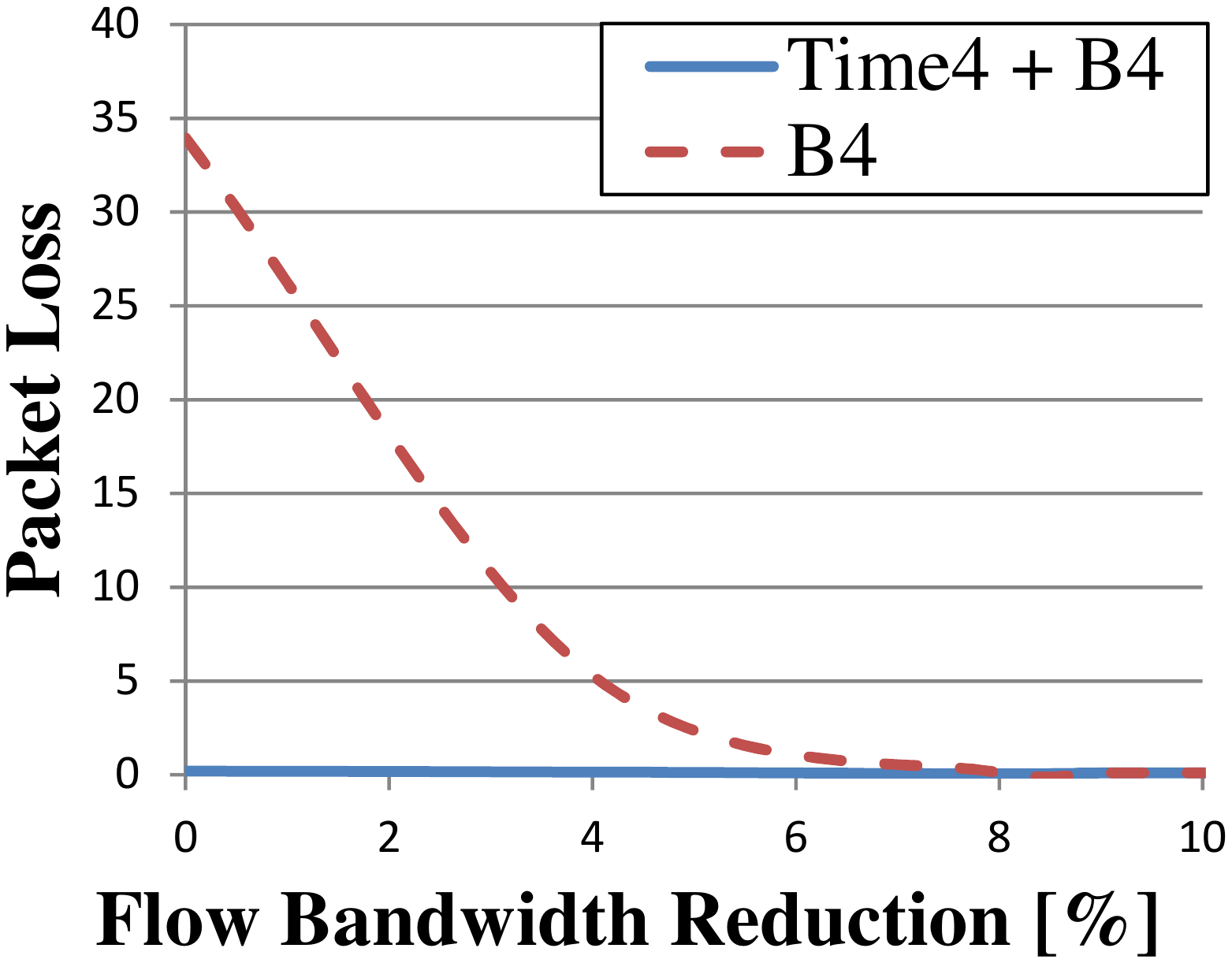}}
  \ifdefined\cutspace \vspace{-5mm} \fi
	\captionsetup{justification=centering}
  \caption{The number of packets lost in~a flow swap using B4 and \timec\ $+$ B4 (with~$n=32$).}
  \label{fig:B4}
  \end{subfigure}%

	\ifdefined\cutspace \vspace{-1mm} \fi
  \caption{Flow swap performance: in large networks (a) \timec\ allows significantly less packet loss than untimed approaches. The packet loss of \timec\ is slightly higher than SWAN and B4 (b), while the latter two methods incur higher overhead. Combining \timec\ with SWAN or B4 provides the best of both worlds; low packet loss (b) and low overhead (c~and~d).}
  \ifdefined\cutspace \vspace{-4mm} \fi
	\label{fig:Nswap}

\end{figure*}

\subsection{Flow Swap Evaluation}
\label{FSAnalysisSec}

\emph{1) Experiment Setting}
\vspace{0.5mm}

We evaluated our prototype on a 71-node testbed under. We used the testbed to emulate an OpenFlow network with 32 hosts and 32 leaf switches, as depicted in Fig.~\ref{fig:Experiment}, with $n=32$.

\ifdefined\TechReport
\begin{figure}[htbp]
  \centering
  \fbox{\includegraphics[width=.35\textwidth]{Experiment}}
	\captionsetup{justification=centering}
  \caption{Experimental evaluation: every host and switch was emulated by a Linux machine in the DeterLab testbed. All links have a capacity of 10 Mbps. The controller is connected to the switches by an out-of-band network.}
  \label{fig:Experiment}
	\ifdefined\cutspace \vspace{-3mm} \fi
\end{figure}
\fi


\textbf{Metric.}
A flow swap that is not performed in a coordinated way may bare a high cost: either packet loss, deep buffering, or a combination of the two. We use packet loss as a \textbf{metric} for the cost of flow swaps, assuming that deep buffering is not used.

We used Iperf to generate flows from the sources to the destination, and to measure the number of packets lost between the source and the destination.

\begin{sloppypar}
\textbf{The flow swap scenario.} All experiments were flow swaps with a swap impact of $0.5$.\footnote{By Theorem~\ref{SwapImpactTheorem}, the source can force the controller to perform a flow swap with an impact as high as roughly $0.5$.} We used two static flows, which were not reconfigured in the experiment: $H_1$ generates a~$5$~Mbps flow that is forwarded through~$q_1$, and~$H_2$ generates a $5$~Mbps flow that is generated through~$q_2$. We generated $n$ additional flows (where $n$ is the number of switches at the bottom layer of the graph): (i) A~$5$~Mbps flow from $H_1$ to the destination. (ii) $n-1$ flows, each having a bandwidth of $\frac{5}{n-1}$~Mbps. Every flow swap in our experiment required the flow of (i) to be swapped with the $n-1$ flows of (ii). Note that this swap has an impact of $0.5$.
\end{sloppypar}

\vspace{2mm}
\emph{2) Experimental Results}
\vspace{0.5mm}

\textbf{\boldtimec\ vs. other update approaches.} 
In this experiment we compared the packet loss of \timec\ to other update approaches described in Sec.~\ref{RelatedSec}. As discussed in Sec.~\ref{RelatedSec}, applying the \order\ approach or the \twophase\ approach to flow swaps produces similar results. This observation is illustrated in Fig.~\ref{fig:UpdateCompare}. In the rest of this section we refer to these two approaches collectively as the \emph{untimed} approaches. 

In our experiments we also implemented a SWAN-based~\cite{hong2013achieving} update, and a B4-based~\cite{jain2013b4} update. 
In SWAN, we used a 10\% scratch on each of the links, and in B4 updates we temporarily reduced the bandwidth of each flow by 10\% to avoid packet loss. As depicted in Fig.~\ref{fig:UpdateCompare}, SWAN and B4 yield a slightly lower packet loss rate than \timec; the average number of packets lost in each \timec\ flow swap is $0.2$, while with SWAN and B4 only $0.1$ packets are lost on average. 

To study the effect of using \textbf{time} in SWAN and in B4, we also performed hybrid updates, illustrated in Fig.~\ref{fig:SWAN} and~\ref{fig:B4}, and in the two right-most bars of Fig.~\ref{fig:UpdateCompare}. We combined SWAN and \timec, by performing a timed update on a network with scratch capacity, and compared the packet loss to the conventional SWAN-based update. We repeated the experiment for various values of scratch capacity, from 0\% to 10\%. As illustrated in Fig.~\ref{fig:SWAN}, the \timec+SWAN approach can achieve the same level of packet loss as SWAN with \textbf{less scratch capacity}. We performed a similar experiment with a timed B4 update, varying the bandwidth reduction rate between 0\% and 10\%, and observed similar results.

\textbf{Number of switches.}
We evaluated the effect of~$n$, the number of switches involved in the flow swap, on the packet loss. 
We performed an $n$-swap with $n=2,4,8,16,32$. As illustrated in Fig.~\ref{fig:LossVsN}, the number of packets lost during an untimed update grows linearly with the number of switches $n$, while the number of packets lost in a \timec\ update is less than one on average, and is not affected by the number of switches. As $n$ increases, the update duration\footnote{The \textbf{update duration} is the time elapsed from the instant the first switch is updated until the instant the last switch is updated. In our setting the update duration is roughly $(n-1)\Delta$.} is longer, and hence more packets are lost during the update procedure.

\ifdefined\ShortVersion
\else
\textbf{Controller performance.}
In this experiment we explored how the controller's performance, represented by $\dmi$, affects the packet loss rate in an untimed update. As~$\dmi$ increases, the update procedure requires a longer period of time, and hence more packets are lost (Fig.~\ref{fig:LossvsDelta}) during the process. We note that although previous work has shown that $\Delta$ can be on the order of microseconds in some cases~\cite{tootoonchian2012controller}, Dpctl is not optimized for performance, and hence $\Delta$ in our experiments was on the order of milliseconds. As shown in Fig.~\ref{fig:LossvsDelta}, we synthetically increased $\Delta$, and observed its effect on the packet loss during flow swaps.

\begin{figure}[htbp]

	\centering
  \fbox{\includegraphics[height=4.9\grafflecm]{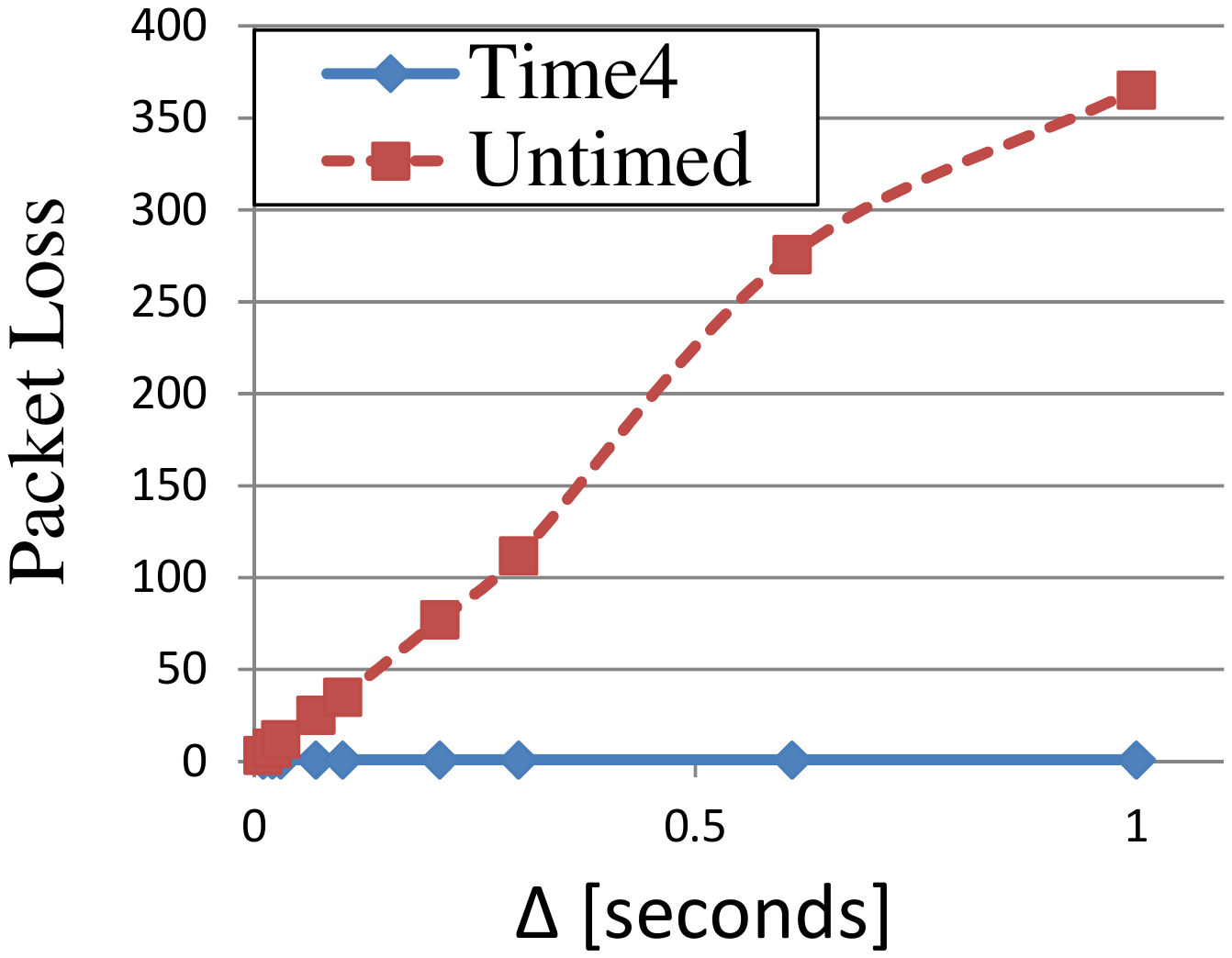}}
	\captionsetup{justification=centering}
	\ifdefined\TechReport
  \caption{The number of packets lost in~a flow swap vs. $\Delta$. The packet loss in \timec\ is not affected by the controller's performance ($\Delta$).}
	\else
  \caption{The number of packets lost in~a flow swap vs. $\Delta$.}
	\fi
  \label{fig:LossvsDelta}
\end{figure}

\fi

\textbf{Installation latency variation.}
Our next experiment (Fig.~\ref{fig:LossvsIr}) examined how the installation latency variation, denoted by $\ivar$, affects the packet loss during an untimed update. We analyzed different values of $\ivar$: in each update we synthetically determined a uniformly distributed installation latency, $I\sim U[0,\ivar]$. 
As shown in Fig.~\ref{fig:LossvsIr}, the switch's installation latency range, $\ivar$, dramatically affects the packet loss rate during an untimed update. Notably, when $\ivar$ is on the order of 1~second, as in the extreme scenarios of~\cite{rotsos2012oflops,jin2014dynamic}, \timec\ has a significant advantage over the untimed approach.

\begin{figure}[htbp]

	\centering
  \begin{subfigure}[t]{.235\textwidth}
  \centering
  \fbox{\includegraphics[height=4.6\grafflecm]{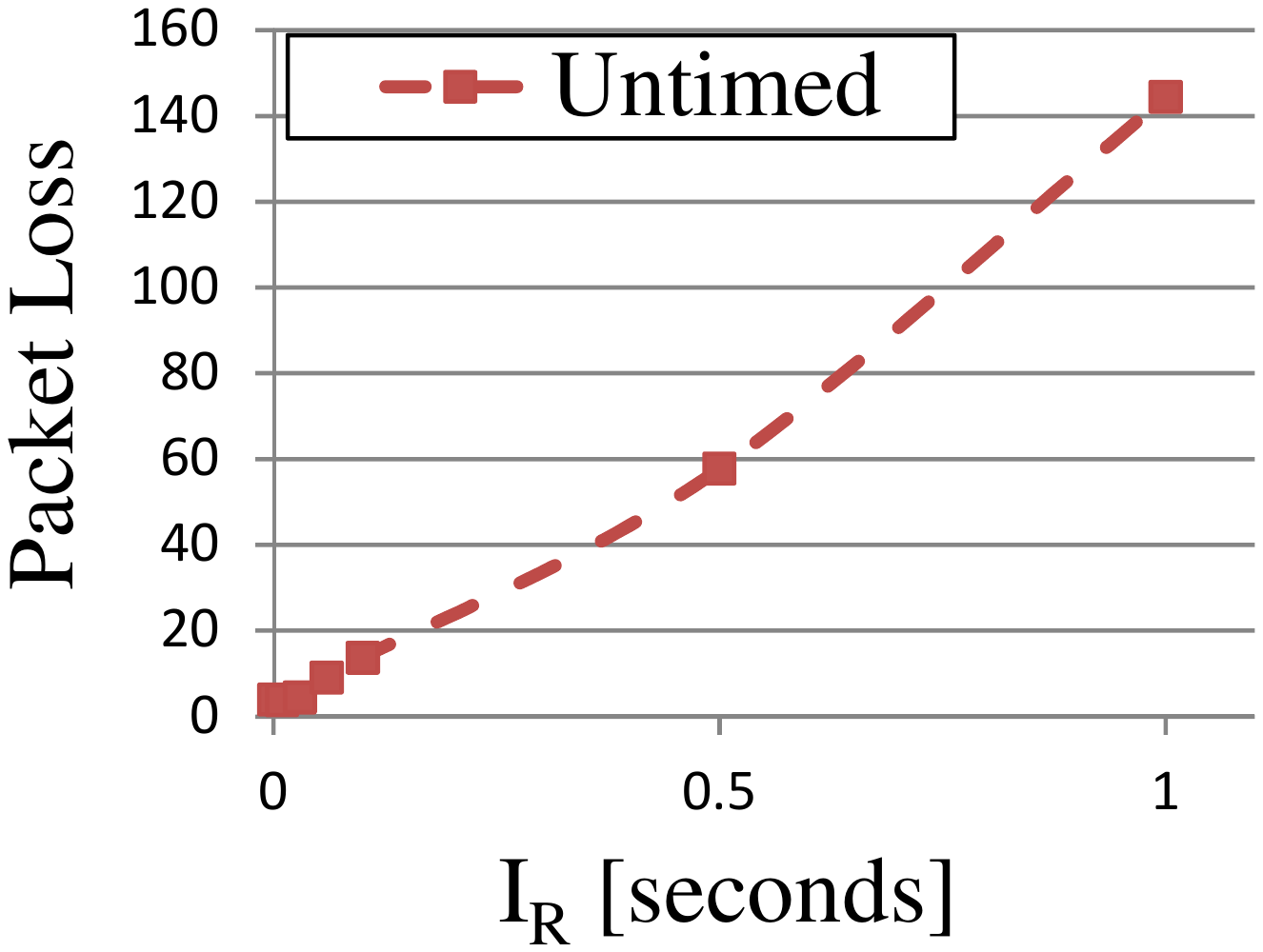}}
	\captionsetup{justification=centering}
  \caption{The number of packets lost in~a flow swap vs. the installation~latency~range, $\ivar$.}
  \label{fig:LossvsIr}
  \end{subfigure}%
  \begin{subfigure}[t]{.25\textwidth}
  \centering
  \fbox{\includegraphics[height=4.6\grafflecm]{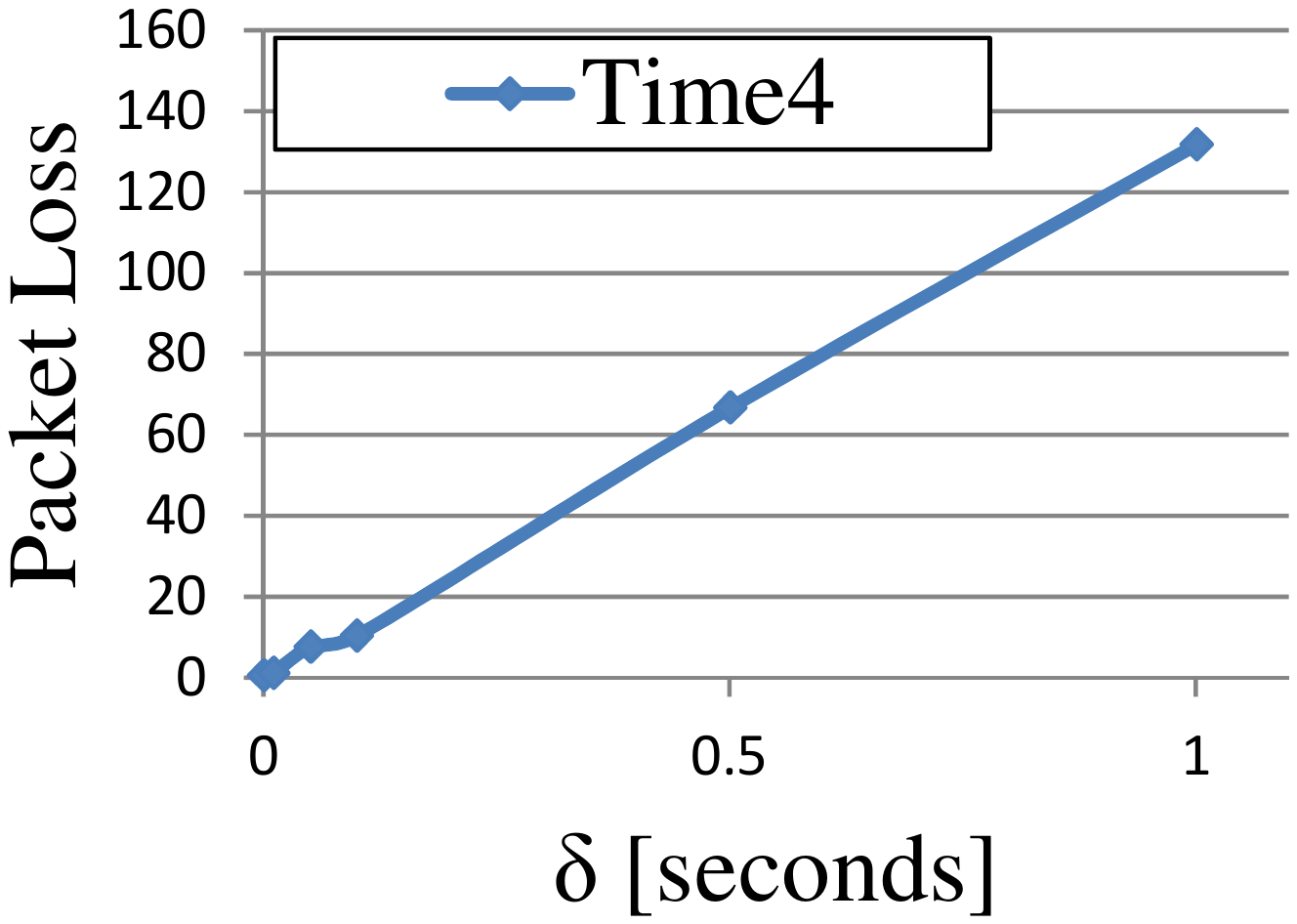}}
	\captionsetup{justification=centering}
  \caption{The number of packets lost in~a flow swap vs. the scheduling~error, $\delta$.}
  \label{fig:Lossvs_delta}
  \end{subfigure}%

	\ifdefined\TechReport
  \caption{Performance as a function of $\ivar$ and $\delta$. Untimed updates are affected by the installation latency variation ($I_R$), whereas \timec\ is affected by the scheduling error ($\delta$). \timec\ is advantageous since typically $\delta < I_R$. }
	\else
  \caption{Performance as a function of $\ivar$ and $\delta$.}
	\fi
  \label{fig:Ivar}
	\ifdefined\cutspace \vspace{-3mm} \fi

\end{figure}

\textbf{Scheduling error.}
Figure~\ref{fig:Lossvs_delta} depicts the packet loss as a function of the scheduling error of \timec.
By Fig.~\ref{fig:LossVsN}, ~\ref{fig:LossvsIr} and~\ref{fig:Lossvs_delta}, we observe that if $\delta$ is sufficiently low compared to $\ivar$ and $(n-1) \Delta$, then \timec\ outperforms the untimed approaches.  Note that even if switches are not implemented with extremely low scheduling error $\delta$, we expect \timec\ to outperform the untimed approach, as typically $\delta < \ivar$, as further discussed in Section~\ref{DiscussionSec}.

\textbf{Summary.}
The experiments presented in this section demonstrate that \timec\ performs significantly better than untimed approaches, especially when the update involves multiple switches, or when there is a non-deterministic installation latency. Interestingly, \timec\ can be used in conjunction with existing approaches, such as SWAN and B4, allowing the same level of packet loss \textbf{with less overhead} than the untimed variants.

\ifdefined\cutspace \vspace{-2mm} \fi
\section{Discussion}
\label{DiscussionSec}

\ifdefined\ShortVersion
\textbf{Scheduling accuracy.} 
\else
\emph{1) Scheduling accuracy}
\vspace{1mm}

\fi
The advantage of timed updates greatly depends on the \textbf{scheduling accuracy}, i.e., on the switches' ability to accurately perform an update at its scheduled time. Clocks can typically be synchronized on the order of $1$~microsecond (e.g.,~\cite{ChinaMobile}) using PTP~\cite{IEEE1588}. However, a switch's ability to accurately perform a scheduled action depends on its implementation. 

\begin{itemize}
	\item \emph{Software switches:} Our experimental evaluation showed that the scheduling error in the software switches we tested was on the order of 1~millisecond.
	\item \emph{Hardware-based scheduling:} The work of~\cite{Infocom-TimeFlip} has shown a method that allows the scheduling error of timed events in hardware switches to be as low as 1~microsecond.
\ifdefined\TechReport
	\item \emph{Software-based scheduling in hardware switches:} 
A scheduling mechanism that relies on the switch's software may be affected by the switch's operating system and by other running tasks. 
Measures can be taken to implement an accurate software-based scheduling in \timec: when a switch is aware of an update that is scheduled to take place at time $T_s$, it can avoid performing heavy maintenance tasks at this time, such as TCAM entry rearrangement. Update messages received slightly before time $T_s$ can be queued and processed after the scheduled update is executed. Moreover, if a switch receives a timed command that is scheduled to take place at the same time as a previously received command, it can send an error message to the controller, indicating that the last received command cannot be executed. 
\fi
\end{itemize}

It is an important observation that in a typical system we expect the scheduling error to be lower than the installation latency variation, i.e., $\delta < I_R$. Untimed updates have a non-deterministic installation latency. On the other hand, timed updates are predictable, and can be scheduled in a way that avoids conflicts between multiple updates, allowing $\delta$ to be typically lower than $I_R$. 

\ifdefined\ShortVersion
\textbf{Model assumptions.} 
\else
\vspace{2mm}
\emph{2) Model assumptions}
\vspace{1mm}

\fi
Our model assumes a \emph{lossless} network with \emph{unsplittable}, \emph{fixed-bandwidth} flows. 
A notable example of a setting in which these assumptions are often valid is a WAN or a carrier network. In carrier networks the maximal \textbf{bandwidth} of a service is defined by its bandwidth profile~\cite{MEF23.1}. Thus, the controller cannot dynamically change the bandwidth of the flows, as they are determined by the SLA. The Frame \textbf{Loss} Ratio (FLR) is one of the key performance attributes~\cite{MEF23.1} that a service provider must comply to, and cannot be compromised. \textbf{Splitting} a flow between two or more paths may result in packets being received out-of-order. Packet reordering is a key performance parameter in carrier-grade performance and availability measurement, as it affects various applications such as real-time media streaming~\cite{Y1563}. Thus, all packets of a flow are forwarded through the same path.

\ifdefined\TechReport
\ifdefined\ShortVersion
\textbf{Short term vs.\ long term scheduling.} 
\else
\vspace{2mm}
\emph{3) Short term vs.\ long term scheduling}
\vspace{1mm}

\fi
The OpenFlow time extension we presented in Section~\ref{DesImpSec} is intended for short term scheduling; a controller should schedule an action to a near-future time, on the order of seconds in the future.  
The challenge in long term scheduling is that during the long period between the time at which the Scheduled Bundle was sent and the time at which it is meant to be executed various external events may occur: the controller may fail or reboot, 
or a second controller\footnote{In an SDN with a distributed control plane, where more than one controller is used.} may try to perform a conflicting update.
Near future scheduling guarantees that external events that may affect the scheduled operation such as a switch reboot have a low probability of occurring.
Since near-future scheduling is on the order of seconds, this short potentially hazardous period is no worse than in conventional updates, where an OpenFlow command may be executed a few seconds after it was sent by the controller.
\fi

\ifdefined\ShortVersion
\textbf{Network latency.}
\else
\vspace{2mm}
\emph{4) Network latency}
\vspace{1mm}

\fi
In Fig.~\ref{fig:Swap}, the switches $S_1$ and $S_3$ are updated at the same time, as it is implicitly assumed that all the links have the same latency. In the general case each link has a different latency, and thus $S_1$ and $S_3$ should not be updated at the same time, but at two different times, $T_1$ and $T_3$, that account for the different latencies.

\ifdefined\TechReport

\ifdefined\ShortVersion
\textbf{Failures.} 
\else
\vspace{2mm}
\emph{5) Failures}
\vspace{1mm}

\fi
A timed update may fail to be performed in a coordinated way at multiple switches if some of the switches have failed, or if some of the controller commands have failed to reach some of the switches. Therefore, the controller uses a reliable transport protocol (TCP), in which dropped packets are retransmitted. If the controller detects that a switch has failed, or failed to receive some of the Bundle messages, the controller can use the \emph{Bundle Discard} to cancel the coordinated update. Note that the controller should send timed update messages sufficiently ahead of the scheduled time of execution, allowing enough time for possible retransmission and Discard message transmission.

\ifdefined\ShortVersion
\textbf{Controller performance overhead.} 
\else
\vspace{2mm}
\emph{6) Controller performance overhead}
\vspace{1mm}

\fi
The prototype design we presented (Fig.~\ref{fig:Arch}) uses \rptp ~\cite{ispcsrptp} to synchronize the switch and the controllers. A synchronization protocol may yield some performance overhead on the controller and switches, and some overhead on the network bandwidth. In our experiments we observed that the CPU utilization of the PTP processes in the controller in an experiment with 32 switches was $5 \%$ on the weakest machine we tested, and significantly less than $1 \%$ on the stronger machines. As for the network bandwidth overhead, accurate synchronization using PTP typically requires the controller to exchange $\sim 5$ packets per second per switch~\cite{ptpEnterprise}, a negligible overhead in high-speed networks.

\fi

\section{Conclusion}
Time and clocks are valuable tools for coordinating updates in a network. 
We have shown that dynamic traffic steering by SDN controllers requires flow swaps, which are best performed as close to instantaneously as possible. 
Time-based operation can help to achieve carrier-grade packet loss rate in environments that require rapid path reconfiguration. Our OpenFlow time extension can be used for implementing flow swaps and \timec. It can also be used for a variety of additional timed update scenarios that can help improve network performance during path and policy updates.


\ifdefined\BlindRev
\else
\section{Acknowledgments}
We gratefully acknowledge Oron Anschel and Nadav Shiloach, who implemented the \timec-enabled OFSoftswitch prototype. We thank Jean Tourrilhes and the members of the Extensibility working group of the ONF for many helpful comments that contributed to the OpenFlow time extension. We also thank Nate Foster, Laurent Vanbever, Joshua Reich and Isaac Keslassy for helpful discussions. We gratefully acknowledge the DeterLab project~\cite{DeterLabProj} for the opportunity to perform our experiments on the DeterLab testbed. This work was supported in part by the ISF grant 1520/11.
\fi



\ifdefined\TechReport
\ifdefined\JournalVer
\else
\pagebreak
\fi
\fi

\bibliographystyle{ieeetr}
\ifdefined\TechReport
\ifdefined\JournalVer
\fi
\bibliography{time}
\else
\bibliography{TimeShort}
\fi

\ifdefined\FutureVersion

\vspace{-15mm}

\begin{IEEEbiography}[{\includegraphics[width=1in,height=1.25in,clip,keepaspectratio]{./TalPhoto.jpg}}]%
{Tal Mizrahi}
is a PhD student at the Technion. He is also a switch architect at Marvell, with over 15 years of experience in networking. Tal is an active participant in the Internet Engineering Task Force (IETF), and in the IEEE~1588 working group. 
\end{IEEEbiography}

\vspace{-15mm}

\begin{IEEEbiography}[{\includegraphics[width=1in,height=1.25in,clip,keepaspectratio]{./PhotoPlaceholder}}]%
{Yoram Moses}
is the Israel Pollak academic chair and a professor of electrical engineering at the Technion. His research focuses on distributed and multi-agent systems, with a focus on fault-tolerance and on applications of knowledge and time in such systems. He is a co-author of the book Reasoning about Knowledge, recipient of the Godel prize in 1997 and the Dijkstra prize in 2009. 
\end{IEEEbiography}

\fi

\ifdefined\TechReport
\ifdefined\JournalVer
\else
\clearpage
\onecolumn
\begin{appendices}

\section{A Time Extension to the OpenFlow Protocol}
\label{ExtAppendix}
\subsection{Introduction}
This section defines a time extension to the OpenFlow protocol. This extension allows the controller to send OpenFlow commands that include an execution time, indicating to the switch \emph{when} the respective command should be performed.

As specified in~\cite{OpenFlow1.4}, a bundle is a sequence of (one or more) OpenFlow modification requests from the controller that is applied as a single OpenFlow operation. The controller uses a commit message to apply the set of requests in the bundle. Consequently, the switch applies all messages in the bundle as a single operation or returns an error. 

This extension defines \emph{scheduled bundles}; a bundle commit request may include an \emph{execution time}, specifying \emph{when} the bundle should be committed. A switch that receives a scheduled bundle, commits the bundle as close as possible to the execution time that was specified in the commit message.

This document also defines the bundle features message, allowing the controller to retrieve information about the switch's bundle support, and specifically about its scheduled bundle support.

\subsection{How It Works}
\label{HowItSec}
\vspace{2mm}
\emph{1) Overview}
\vspace{1mm}

This extension allows a bundle operation to be invoked at a scheduled time that is determined by the controller.

The time-based bundle procedure is illustrated in Figure~\ref{fig:Cont2Switch}:
\begin{enumerate}
	\item The controller starts the bundle procedure by sending an \verb|OFPBCT_OPEN_REQUEST|, and receives a reply from the switch.
	\item The controller then sends a set of $N$ \verb|OFPT_BUNDLE_ADD_MESSAGE| messages, for some $N\geq 1$.
	\item The controller MAY then send an \verb|OFPBCT_CLOSE_REQUEST|. The close request is optional, and thus the controller may skip this step.
	\item The controller sends an \verb|OFPBCT_COMMIT_REQUEST|. The \verb|OFPBCT_COMMIT_REQUEST| includes two time-related fields: the time flag and optionally the time property. When the time flag is set, it indicates that this is a \emph{scheduled commit}. A scheduled commit request includes the time property field, which contains the scheduled time at which the switch is expected to apply the bundle.
	\item After receiving the commit message, the switch applies the bundle at the scheduled time, $T_s$, and sends a \verb|OFPBCT_COMMIT_REPLY| to the controller. 
\end{enumerate}

\begin{figure}[htbp]
  \begin{center}
  \fbox{\includegraphics[width=1\textwidth]{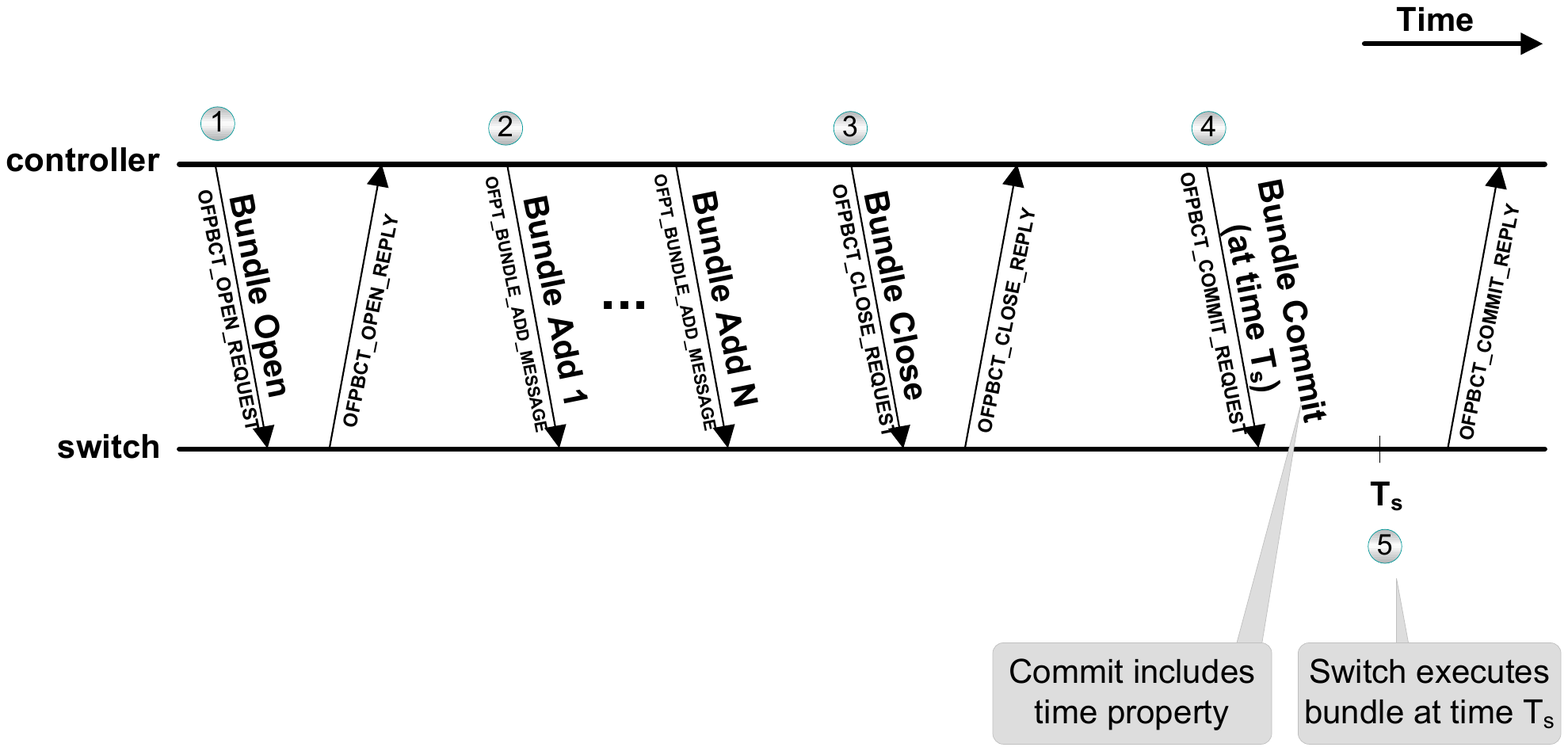}}
  \end{center}
  \caption{Scheduled Bundle Procedure}
  \label{fig:Cont2Switch}
\end{figure}

\begin{sloppypar}
\textbf{Discarding scheduled bundles.} The controller may cancel a scheduled commit by sending an \verb|OFPT_BUNDLE_CONTROL| message with type \verb|OFPBCT_DISCARD_REQUEST|. An example is shown in Figure~\ref{fig:Cont2SwitchDiscard}; if the switch is not able to schedule the operation after receiving the commit message, it responds to the controller with an error message (see~\ref{ErrorsSec}). This indication may be used for implementing a coordinated update where either all the switches successfully schedule the operation, or the bundle is discarded; when a controller receives a scheduling error message from one of the switches it can send a discard message (step 5' in in Figure~\ref{fig:Cont2SwitchDiscard}) to other switches that need to commit a bundle at the same time, and abort the bundle. 
\end{sloppypar}

\begin{figure}[htbp]
  \begin{center}
  \fbox{\includegraphics[width=1\textwidth]{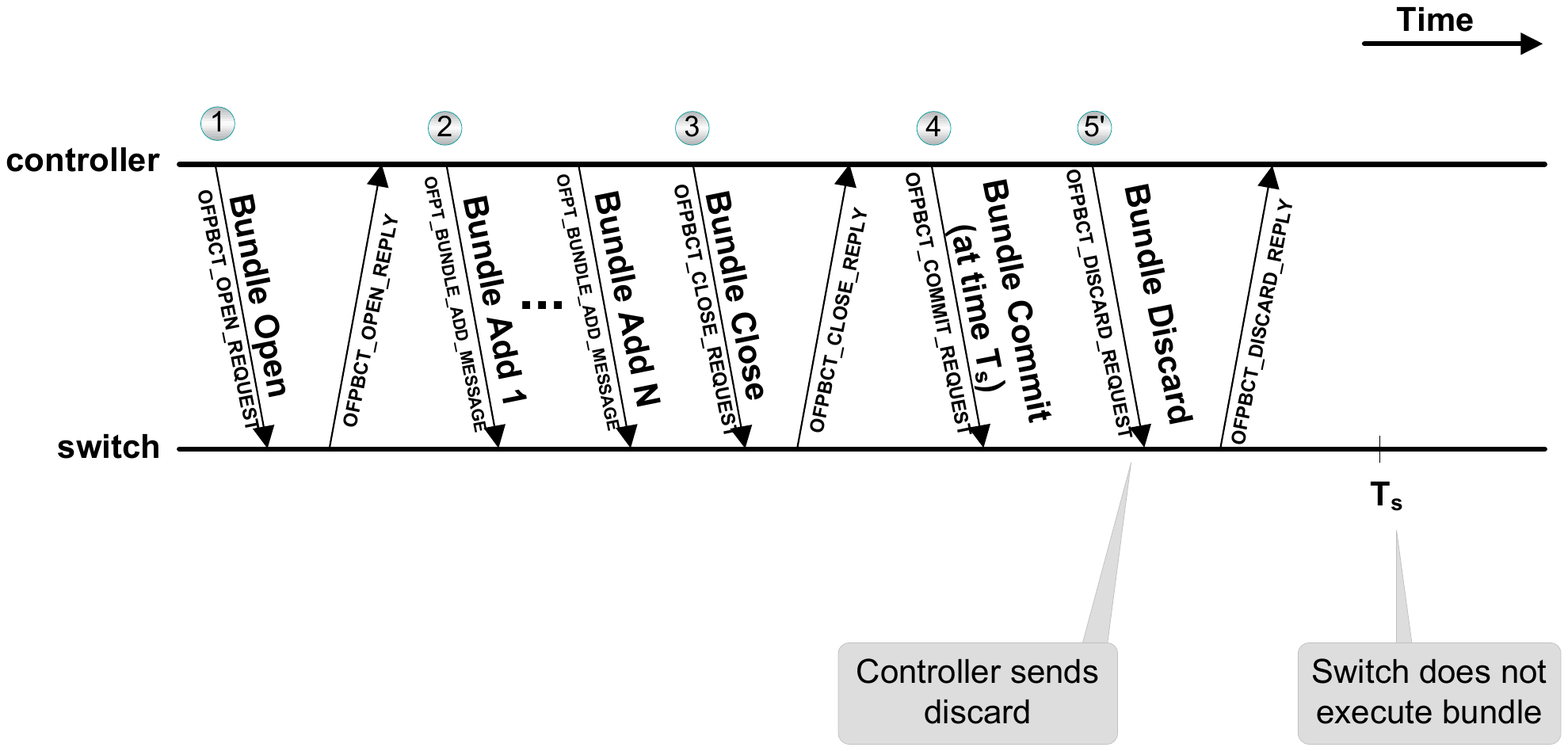}}
  \end{center}
  \caption{Discarding a Scheduled Commit}
  \label{fig:Cont2SwitchDiscard}
\end{figure}

\vspace{2mm}
\emph{2) Timekeeping and Synchronization}
\vspace{1mm}

Every switch that supports scheduled bundles must maintain a clock. It is assumed that clocks are synchronized by a method that is outside the scope of this document, e.g., the Network Time Protocol (NTP) or the Precision Time Protocol (PTP).

Two factors affect how accurately a switch can commit a scheduled bundle; one factor is the accuracy of the clock synchronization method used to synchronize the switches' clocks, and the second factor is the switch's ability to execute real-time operations, which greatly depends on how it is implemented.

This document does not define any requirements pertaining to the degree of accuracy of performing scheduled operations. However, every switch that supports the time extension is able to report its estimated scheduling accuracy to the controller. The controller can retrieve this information from the switch using the bundle features message, defined in Section~\ref{FeatureSec}.

Since a switch does not perform configuration changes instantaneously, the processing time of required operations should not be overlooked; in the context of the extension described in this paper the scheduled time and execution time always refer to the start time of the relevant operation.

\vspace{2mm}
\emph{3) Scheduling Tolerance}
\vspace{1mm}

When a switch receives a scheduled commit message, it MUST verify that the scheduled time, $T_s$, is not too far in the past or in the future. As illustrated in Figure~\ref{fig:Tolerance}, the switch verifies that $T_s$ is within the \emph{scheduling tolerance} range.

The lower bound on $T_s$ verifies the freshness of the packet so as to avoid acting upon old and possibly irrelevant messages. Similarly, the upper bound on $T_s$ guarantees that the switch does not take a long-term commitment to execute an action that may become obsolete by the time it is scheduled to be invoked. 

\begin{figure}[htbp]
  \begin{center}
  \fbox{\includegraphics[width=.5\textwidth]{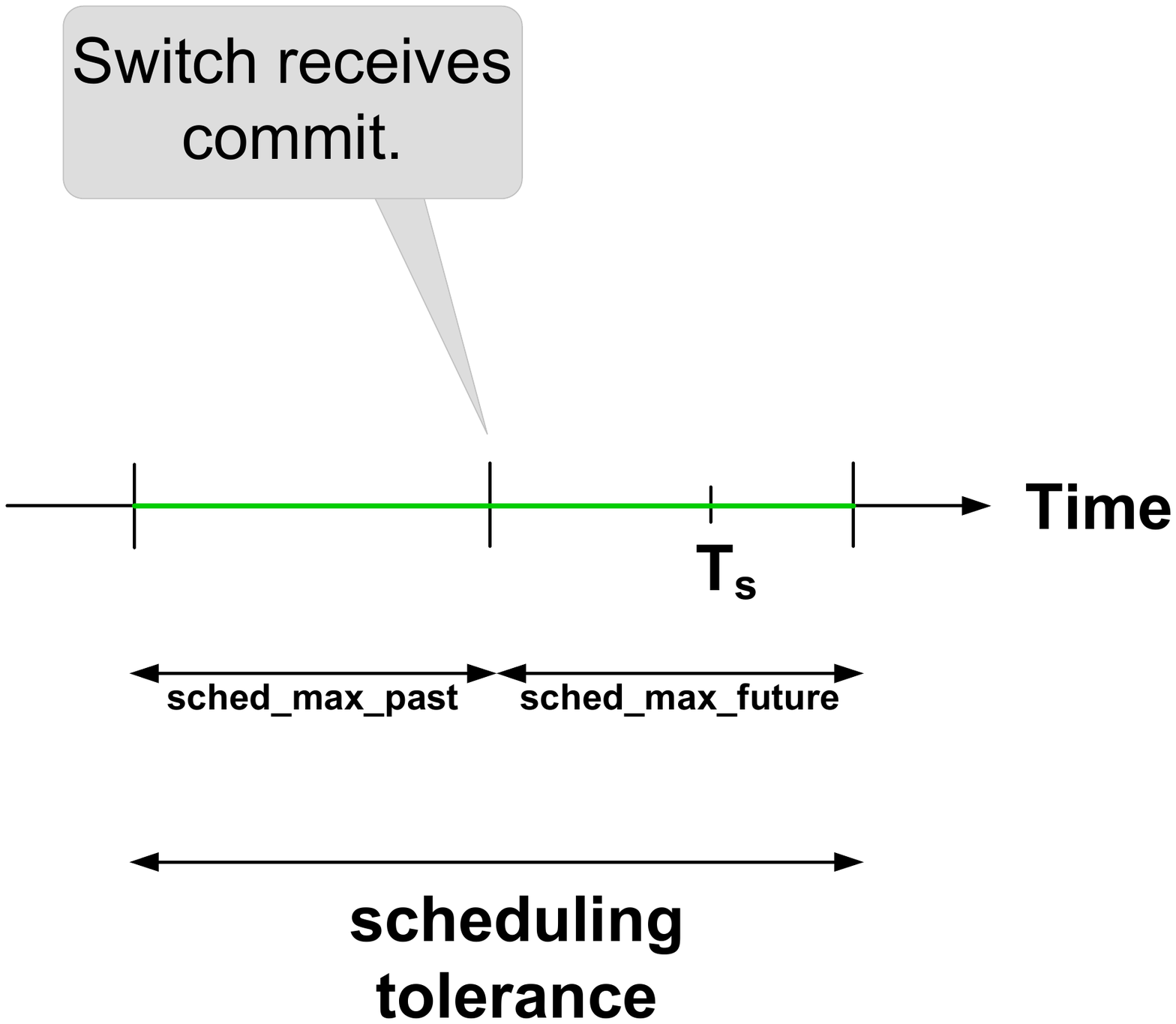}}
  \end{center}
  \caption{Scheduling Tolerance}
  \label{fig:Tolerance}
\end{figure}

The scheduling tolerance is determined by two parameters, \verb|sched_max_future| and \verb|sched_max_past|. The default value of these two parameters is 1 second. The controller MAY set these fields to a different value using the bundle features request, as described in Section~\ref{FeatureSec}.

If the scheduled time, $T_s$ is within the scheduling tolerance range, the scheduled commit is performed; if $T_s$ occurs in the past and within the scheduling tolerance, the switch applies the bundle as soon as possible. If $T_s$ is a future time, the switch applies the bundle at $T_s$. If $T_s$ is not within the scheduling tolerance range, the switch responds to the controller with an error message.

\subsection{Time-based Bundle Messages}
This section updates Section 7.3.9 of~\cite{OpenFlow1.4}. The reader is assumed to be familiar with Sections 6.8 and 7.3.9 of~\cite{OpenFlow1.4}.

The time extension allows bundle commit messages to include a time property, defining when the bundle should be executed. 

The time extension defines two time-related fields in \verb|OFPBCT_COMMIT_REQUEST| messages:
\begin{itemize}
	\item The time flag, denoted \verb|OFPBF_TIME|.
	\item The time property.
\end{itemize}

\begin{sloppypar}
All \verb|OFPT_BUNDLE_CONTROL| messages include the \verb|OFPBF_TIME| flag. In control messages with type \verb|OFPBCT_COMMIT_REQUEST| the time flag MAY be set, indicating that the time property field is present. The time property incorporates the time at which the switch is scheduled to apply the bundle.
\end{sloppypar}

Control messages with a type that is not \verb|OFPBCT_COMMIT_REQUEST| MUST have the \verb|OFPBF_TIME| flag disabled, and this flag is ignored by the switch in these messages.

\vspace{2mm}
\emph{1) The Time Flag}
\vspace{1mm}

This document updates \verb|ofp_bundle_flags| by adding the \verb|OFPBF_TIME| flag, as follows: 

\begin{footnotesize}
\begin{verbatim}
/* Bundle configuration flags. */
enum ofp_bundle_flags {
  OFPBF_TIME = 1 << 2, /* Execute in a specific time. */
};
\end{verbatim}
\end{footnotesize}

\vspace{2mm}
\emph{2) The Bundle Time Property}
\vspace{1mm}

This document defines a new bundle property, the time property. 

\begin{footnotesize}
\begin{verbatim}
/* Bundle property */
struct ofp_bundle_prop_time {
  uint16_t type;   /* OFPBPT_TIME */
  uint16_t length; /* Length in bytes = 24 */
  uint8_t pad[4];

  struct ofp_time scheduled_time;  /* The scheduled time at which the switch should apply the bundle. */
};
OFP_ASSERT(sizeof(struct ofp_bundle_prop_time) == 24);   
\end{verbatim}
\end{footnotesize}

The \verb|type| field in the time property is set to the value \verb|OFPBPT_TIME|, defined as follows:

\begin{footnotesize}
\begin{verbatim}
/* Bundle property types. */
enum ofp_bundle_prop_type {
  OFPBPT_TIME = 1, /* Time property. */
};
\end{verbatim}
\end{footnotesize}

\vspace{2mm}
\emph{3) Time Format}
\vspace{1mm}

The time format defined in this extension is based on the one defined in~\cite{IEEE1588}. It consists of two sub-fields; a \verb|seconds| field, representing the integer portion of time in seconds\footnote{The seconds field in IEEE 1588 is 48 bits long. The seconds field used in this extension is a 64-bit field, but it has the same semantics as the seconds field in the IEEE 1588 time format.}, and a \verb|nanoseconds| field, representing the fractional portion of time in nanoseconds, i.e., $0 \leq nanoseconds \leq (10^9-1)$.

\begin{footnotesize}
\begin{verbatim}
/* Time format */
struct ofp_time {
  uint64_t seconds;
  uint32_t nanoseconds;
  uint8_t pad[4];
};
OFP_ASSERT(sizeof(struct ofp_time) == 16);   
\end{verbatim}
\end{footnotesize}

As defined in~\cite{IEEE1588}, time is measured according to the International Atomic Time (TAI) timescale. The epoch is defined as 1 January 1970 00:00:00 TAI.

\subsection{Bundle Features Request}
\label{FeatureSec}

The bundle features request defined in this document allows a controller to query a switch about its bundle capabilities, including its scheduled bundle capabilities. 

This section extends Section 7.3.5 of~\cite{OpenFlow1.4}. The reader is assumed to be familiar with Section 7.3.5 of~\cite{OpenFlow1.4}.

The bundle features request is a new multipart message type, the \verb|OFPMP_BUNDLE_FEATURES| message. This document updates \verb|ofp_multipart_type| by adding the \verb|OFPMP_BUNDLE_FEATURES| type, as follows:

\begin{footnotesize}
\begin{verbatim}
enum ofp_multipart_type {
  /* Bundle features.
   * The request body is ofp_bundle_features_request.
   * The reply body is struct ofp_bundle_features. */
  OFPMP_BUNDLE_FEATURES = 17,
};
\end{verbatim}
\end{footnotesize}

\vspace{2mm}
\emph{1) Bundle Features Request Message Format}
\vspace{1mm}

The body of the bundle features request message is defined by \verb|struct ofp_bundle_features_request|, as follows:

\begin{footnotesize}
\begin{verbatim}
/* Body of OFPMP_BUNDLE_FEATURES request. */
struct ofp_bundle_features_request {
  uint32_t feature_request_flags;   /* Bitmap of "ofp_bundle_feature_flags". */
  uint8_t pad[4];
  
  /* Bundle features property list - 0 or more. */
  struct ofp_bundle_features_prop_header properties[0];
};
OFP_ASSERT(sizeof(struct ofp_bundle_features) == 8);
\end{verbatim}
\end{footnotesize}

\begin{sloppypar}
The body consists of a flags field, followed by zero or more property TLV fields. The flags field, \verb|feature_request_flags|, is defined as follows:
\end{sloppypar}

\begin{footnotesize}
\begin{verbatim}
/* Flags used in a OFPMP_BUNDLE_FEATURES request. */
enum ofp_bundle_feature_flags {
  OFPBF_TIMESTAMP = 1 << 0,      /* When enabled, the current request includes a timestamp, using 
                                  * the time property */
  OFPBF_TIME_SET_SCHED = 1 << 1, /* When enabled, the current request includes the sched_max_future  
                                  * and sched_max_past parameters, using the time property */
};
\end{verbatim}
\end{footnotesize}

If at least one of the flags \verb|OFPBF_TIMESTAMP| or \verb|OFPBF_TIME_SET_SCHED| is set, the bundle features request includes a time property.

The bundle features properties are specified below.

\vspace{2mm}
\emph{2) Bundle Features Reply Message Format}
\vspace{1mm}

If the features request is processed successfully by the switch, it sends a reply to the controller. The body of the bundle features reply message is \verb|struct ofp_bundle_features|, as follows:

\begin{footnotesize}
\begin{verbatim}
/* Body of reply to OFPMP_BUNDLE_FEATURES request. */
struct ofp_bundle_features {
  uint16_t capabilities; /* Bitmap of "ofp_bundle_flags". */
  uint8_t pad[6];

  /* Bundle features property list - 0 or more. */
  struct ofp_bundle_features_prop_header properties[0];
};
OFP_ASSERT(sizeof(struct ofp_bundle_features) == 8);
\end{verbatim}
\end{footnotesize}

\vspace{2mm}
\emph{3) Bundle Features Properties}
\vspace{1mm}

The optional property fields are defined as TLVs with a common header format, as follows:

\begin{footnotesize}
\begin{verbatim}
/* Common header for all bundle feature Properties */
struct ofp_bundle_features_prop_header {
  uint16_t type;   /* One of OFPTMPBF_*. */
  uint16_t length; /* Length in bytes of this property. */
};
OFP_ASSERT(sizeof(struct ofp_bundle_features_prop_header) == 4);
\end{verbatim}
\end{footnotesize}

The currently defined types are as follows:

\begin{footnotesize}
\begin{verbatim}
/* Bundle features property types. */
enum ofp_bundle_features_prop_type {
  OFPTMPBF_TIME_CAPABILITY = 0x1, /* Time feature property. */
  OFPTMPBF_EXPERIMENTER = 0xFFFF, /* Experimenter property. */
};
\end{verbatim}
\end{footnotesize}

\textbf{The Bundle Features Time Property.}

A bundle feature request in which at least one of the flags \verb|OFPBF_TIMESTAMP| or \verb|OFPBF_TIME_SET_SCHED| is set, incorporates the time property.
A bundle feature reply that has the \verb|OFPBF_TIME| flag set incorporates the time property.

The time property is defined as follows:

\begin{footnotesize}
\begin{verbatim}
struct ofp_bundle_features_prop_time {
  uint16_t type;   /* OFPTMPBF_TIME_CAPABILITY. */
  uint16_t length; /* Length in bytes of this property. */
  uint8_t pad[4];

  struct ofp_time sched_accuracy;   /* The scheduling accuracy, i.e., how accurately the switch can
                                     * perform a scheduled commit. This field is used only in bundle
                                     * features replies, and is ignored in bundle features requests. */
  struct ofp_time sched_max_future; /* The maximal difference between the 
                                     * scheduling time and the current time. */
  struct ofp_time sched_max_past;   /* If the scheduling time occurs in the past, defines the maximal
                                     * difference between the current time and the scheduling time. */
  struct ofp_time timestamp;        /* Indicates the time during the transmission of this message. */
};
OFP_ASSERT(sizeof(struct ofp_bundle_features_prop_time) == 72);
\end{verbatim}
\end{footnotesize}

The time property in a bundle features request includes:
\begin{sloppypar}
\begin{itemize}
	\item \verb|sched_accuracy|: this field is relevant only to bundle features replies, and the switch must ignore this field in a bundle features request.
	\item \verb|sched_max_future| and \verb|sched_max_past|: a switch that receives a bundle features request with \verb|OFPBF_TIME_SET_SCHED| set, should attempt to change its scheduling tolerance values according to the \verb|sched_max_future| and \verb|sched_max_past| values from the time property. If the switch does not successfully update its scheduling tolerance values, it replies with an error message. 
	\item \verb|timestamp|, indicating the controller's time during the transmission of this message. A switch that receives a bundle features request with \verb|OFPBF_TIMESTAMP| set, may use the received timestamp to roughly estimate the offset between its clock and the controller's clock.
\end{itemize}

The time property in a bundle features reply includes:
\begin{itemize}
	\item \verb|sched_accuracy|, indicating the estimated scheduling accuracy of the switch. For example, if the value of \verb|sched_accuracy| is $1000000$ nanoseconds (1 ms), it means that when the switch receives a bundle commit scheduled to time $T_s$, the commit will in practice be invoked at $T_s \pm 1 \ ms$. The factors that affect the scheduling accuracy are discussed in Section~\ref{HowItSec}.
	\item \verb|sched_max_future| and \verb|sched_max_past|, containing the scheduling tolerance values of the switch. If the corresponding bundle features request has the \verb|OFPBF_SET_TIME_TOLERANCE| flag enabled, these two fields are identical to the ones sent be the controller in the request.
	\item \verb|timestamp|, indicating the switch's time during the transmission of this feature reply. Every bundle feature reply that includes the time property also includes a timestamp. The timestamp may be used by the controller to get a rough estimate of whether the switch's clock is synchronized to the controller's. 
\end{itemize}
\end{sloppypar}

\subsection{Errors}
\label{ErrorsSec}
As defined in Section 7.5.4 of~\cite{OpenFlow1.4} the switch can send an error message to the controller, which includes a \verb|type| and a \verb|code|. This document extends Section 7.5.4 with additional codes, as specified below.

\vspace{2mm}
\emph{1) Bundle Error}
\vspace{1mm}

\begin{sloppypar}
When the switch has an error related to the bundle operation, it sends an error message with type \verb|OFPET_BUNDLE_FAILED|. This document defines the following new codes:
\end{sloppypar}
\begin{itemize}
	\item \verb|OFPBFC_SCHED_NOT_SUPPORTED| - this code is used when the switch does not support scheduled bundle execution and receives a commit message with the \verb|OFPBF_TIME| flag set.
	\item \verb|OFPBFC_SCHED_FUTURE| - used when the switch receives a scheduled commit message and the scheduling time exceeds the \verb|sched_max_future| (see Section~\ref{HowItSec}).
	\item \verb|OFPBFC_SCHED_PAST| - used when the switch receives a scheduled commit message and the scheduling time exceeds the \verb|sched_max_past| (see Section~\ref{HowItSec}).
\end{itemize}

The \verb|ofp_bundle_failed_code| is updated as follows:

\begin{footnotesize}
\begin{verbatim}
enum ofp_bundle_failed_code {
  OFPBFC_SCHED_NOT_SUPPORTED = 16, /* Scheduled commit was received and scheduling is not supported. */
  OFPBFC_SCHED_FUTURE = 17, /* Scheduled commit time exceeds upper bound. */
  OFPBFC_SCHED_PAST = 18, /* Scheduled commit time exceeds lower bound.  */
};
\end{verbatim}
\end{footnotesize}

\vspace{2mm}
\emph{2) Bundle Features Error}
\vspace{1mm}

When the switch has an error related to the \verb|OFPMP_BUNDLE_FEATURES| request, it replies with an error message of type \verb|OFPET_BAD_REQUEST|. The code \verb|OFPBRC_MULTIPART_BAD_SCHED| indicates that the request had the \verb|OFPBF_SET_TIME_TOLERANCE| flag enabled, and the switch failed to update the scheduling tolerance values.

The \verb|ofp_bad_request_code| is updated as follows:

\begin{footnotesize}
\begin{verbatim}
enum ofp_bad_request_code {
  OFPBRC_MULTIPART_BAD_SCHED = 16, /* Switch received a OFPMP_BUNDLE_FEATURES request and failed 
                                    * to update the scheduling tolerance. */
};
\end{verbatim}
\end{footnotesize}

\end{appendices}
\fi
\fi

\end{document}